\newcommand{%
    
    \import{./figures/}{.pdf_tex}
}[2][1]{%
    
    \import{./figures/}{#2.pdf_tex}
}
\theoremstyle{definition}
\newtheorem{Definition}{Definition}[section]
\theoremstyle{definition}
\newtheorem{Remark}{Remark}[section]
\theoremstyle{definition}
\newtheorem{Proposition}{Proposition}[section]
\theoremstyle{definition}
\newtheorem{Corollary}{Corollary}[section]
\theoremstyle{definition}
\theoremstyle{definition}
\newtheorem{Lemma}{Lemma}[section]
\theoremstyle{definition}
\theoremstyle{definition}
\newtheorem{Example}{Example}[section]
\theoremstyle{definition}
\newtheorem{Assumption}{Assumption}[section]
\title{Bounds and Phase Transitions for Phonons in Complex Network Structures}
\author[$\dagger$]{Riccardo Bonetto\orcidlink{0000-0001-8075-6147}}
\begin{document}
\affil[$\dagger$]{University of Groningen — Bernoulli Institute for Mathematics, Computer Science and Artificial Intelligence; Groningen, The Netherlands}
\maketitle

\abstract{
We study a model of networked atoms or molecules oscillating around their equilibrium positions. The model assumes the harmonic approximation of the interactions. We provide bounds for the total number of phonons, and for the specific heat, in terms of the average Wiener capacity, or resistance, of the network. Thanks to such bounds, we can distinguish qualitatively different behaviours in terms of the network structure alone.
}

%\vspace{0.5cm}

%\keywords{}

%\vspace{0.5cm}

%\msc{}

%%%%%%%%%%%%%%%%%%%%%%%%%%%%%%%%%%%%%%%%%%
\section{Introduction}

Complex systems show rich and intricate behaviour arising from the interaction of simpler individual constituents. The structure of the interactions can often be described by a network, where each single system is represented by a vertex or node, and the interactions are represented by edges or connections. Complex networks of quantum systems play a relevant role in modern quantum physics \cite{nokkala2023complex}. Examples can be found in quantum communications \cite{bassoli2021quantum, pirandola2019end}, Josephson junctions \cite{nakajima1976logic, niu1989theory}, and condensed matter \cite{alodjants2022phase, dorogovtsev2008critical, herrero2002ising, suchecki2013ising}, among others. A crucial objective is to relate physical quantities to network properties. 

This paper focuses on a central model for solid state physics: atoms, or molecules, oscillating around their equilibrium position, see \eqref{eq:ham}. The collective vibrations give rise to quasiparticles called phonons, whose behaviour determines fundamental properties of condensed matter. The interaction topology is provided by a network structure, therefore generalising classic assumptions, such as lattices, chains and rings. Similar models to the one studied in the present paper have been used as simplified models for networks of nanowires and nanotubes \cite{xiong2018influence, xiong2021regulating}, where numerical simulations have shown the importance of the network structure for the thermodynamic properties. Also, thermal transport properties of phonons in complex networks have been studied in \cite{xiong2024regulation}, while the specific heat of carbon nanotube networks has been investigated in \cite{li2009specific}. Actually, our goal is to contribute to the theoretical understanding of how the underlying network structure is related to the total number of phonons and the specific heat.

The content of the paper is structured as follows. In Section \ref{sec:preliminary}, we decouple the system via some coordinate transformations constructed in terms of algebraic quantities associated with the graph structure. The paper's main results are proved in Section \ref{eq:phase_transitions}. We provide a bound for the average total number of phonons, Proposition \ref{prop:bound_N}, and for the specific heat, Proposition \ref{prop:specific_heat}, in terms of a global network quantity known as average Wiener capacity, or resistance. Additionally, we state conditions discerning between different behaviours in terms of network properties, Proposition \ref{prop:criteria}. In Section \ref{sec:numerical}, we validate numerically the theory developed showcasing the influence of the number of connections of the graph on the thermodynamic quantities, employing a specific class of networks: circulant graphs. We conclude, in Section \ref{sec:discussion}, with a brief discussion on the results obtained and the future directions. In order to keep a smooth flow of ideas, and avoiding too many digressions, we prove some additional results necessary for the main part of the manuscript in a few appendices.

%Appendix \ref{app:graph} contains a brief recap of the basic notions of algebraic graph theory. Appendix \ref{app:ort} we provide an explicit relation between conserved quantities and the matrix associted to the network.

%\rb{Scheme for later proper writing citations}
%
%\begin{itemize}
%    \item Thermal transport properties of phonons in complex networks have been studied in \cite{xiong2024regulation} (the Hamiltonian is very similar to ours, they use an extra nonlinear term in the potential of power four. They assume ONLY nearest neighbour interaction, and the analysis is via numerical simulations.) 
%    \item A similar if not identical model is studied in \cite{xiong2018influence} (same author), where they mention that the model is a simplification of networks of nanowires and nanotubes. Notice that they mention that the assumption of nearest neighbours interaction is a bit restrictive w.r.t. real scenario. (They state that average degree of the network is very relevant for heat conduction.) Notice also that they study a classical system.
%    \item In \cite{xiong2021regulating} they propose a model for nanotubes/nanowires that is given by a harmonic potential , but with different masses, which depends on the degree of the relative node. The analysis is numerical with brief theoretical motivations.
%    \item In \cite{li2009specific} the compute numerically the specific heat of carbon nanotube networks.
%\end{itemize}

\section{Preliminary Results}\label{sec:preliminary}

Let $\mathcal{G}=\{\mathcal{V}, \mathcal{E}\}$ be a simple connected graph, where $\mathcal{V}=\{ 1, \dots, n\}$ is the set of vertices, and $\mathcal{E}$ is the set of edges. We denote by $\{i,j\} \in \mathcal{E}$ an undirected edge connecting the node $i$ to the node $j$, or vice-versa. We consider a quantum system of atoms or molecules (which, from now on, we will refer to as atoms) whose configuration is described by the graph $\mathcal{G}$. In general, among two interacting atoms, say $i$ and $j$, there is an interatomic pair potential $V(q_i-q_j)$. We assume the \emph{harmonic approximation} for the interaction so that the Hamiltonian of the system reads
\begin{equation}\label{eq:ham}
    H= \frac{1}{2} \left( \sum_{i=1}^n p_i^2 + \sum_{\{i,j\} \in \mathcal{E} } (q_i - q_j)^2 \right) ,
\end{equation}
where $q_1, \dots, q_n$ are the positions, and $p_1, \dots, p_n$ are the conjugate momenta. Notice that, without loss of generality, the natural frequency arising from the interaction is set to one, as well as the masses, and the Planck constant $\hbar$. The form of the Hamiltonian \eqref{eq:ham} is a well-known starting point for the study of crystal vibrations \cite{mahan2013many, landau1980lifshitz}. For a long time, the interaction topology considered was restricted to a few particular relevant cases such as lattices, rings, and chains. More recently, the study of complex network structures gave new life to the subject \cite{kim2003netons, burioni2001bose, xiong2024regulation, xiong2018influence, de2007networks}. Indeed, our main goal is to describe the physical properties of the system in terms of the underlying complex network structure.

\subsection{Decoupling and Diagonalisation}

The Hamiltonian \eqref{eq:ham} is fully coupled and, therefore difficult to study in such form. We are going to prove that there are appropriate transformations that decouple and diagonalise \eqref{eq:ham}, leading, in turn, to a more straightforward analysis of the system. The decoupling and diagonalisation of Hamiltonians of the form \eqref{eq:ham} is a standard procedure in almost every book of solid-state physics. However, the main arguments used are based on Fourier methods intrinsically specific to structures such as lattices, chains, and rings. We are going to use properties and tools from algebraic graph theory \emph{applicable to every network structure}. In Appendix \ref{app:graph}, we summarise the key notions of algebraic graph theory necessary for the results in the paper.

Let us start by recalling some basic notions of quantum mechanics. Positions and momenta are Hermitian operators satisfying the canonical commutation relations
\begin{equation}
    [q_i, p_j] = \mathrm{i}  \delta_{ij} , \quad [q_i, q_j] = 0 , \quad [p_i, p_j] = 0,
\end{equation}
%\begin{align}
%    [x_i, p_j] &= \mathrm{i}  \delta_{ij} , \\
%    [x_i, x_j] &= 0 , \\
%    [p_i, p_j] &= 0 ,
%\end{align}
where $[\cdot, \cdot]$ denotes the commutator, $\mathrm{i}$ is the imaginary unit, and $\delta_{ij}$ is the Kronecker delta. For each individual system, the Hilbert space is $\mathbb{H}=L^2(\mathbb{C})$, i.e., the space of square-integrable functions. Moreover, for a system with $n$ degrees of freedom, the Hilbert space can be written as the tensor product of the Hilbert spaces of the individual systems, $\mathbb{H}^n := \bigotimes_{i=1}^n \mathbb{H}$. The scalar product on Hilbert spaces is denoted by $\langle \cdot , \cdot  \rangle$, the specific Hilbert space considered will be clear from the context. We also introduce a vectorial notation for position, $q:=(q_1, \dots, q_n)^\intercal$, and momentum, $p:=(p_1, \dots, p_n)^\intercal$. We denote by $*$ the scalar product among components of operators in vectorial form, e.g., $q*p = \sum_{i=1}^n q_i p_i$.

\begin{Proposition}
    Let $S$ be an orthogonal transformation diagonalising the Laplacian matrix $L$, see Lemma \ref{lm:orth},
    \begin{equation}
        S^\intercal L S = \Lambda ,
    \end{equation}
    where $\Lambda=\textup{diag}(\lambda_0, \dots, \lambda_{n-1})$, and $\lambda_0, \dots, \lambda_{n-1}$ are the eigenvalues of $L$. The \emph{canonical transformations}
    %\begin{equation}\label{eq:can_tran}
    %\begin{aligned}
    %  q \mapsto q(Q,P) &= S Q  ,\\
    %    p \mapsto p(Q,P) &= S P ,
    %\end{aligned}      
    %\end{equation}
    \begin{equation}\label{eq:can_tran}
    \begin{aligned}
      q \mapsto S^\intercal q &=: Q  ,\\
        p \mapsto S^\intercal p &=: P ,
    \end{aligned}      
    \end{equation}
    where $Q:=(Q_1, \dots, Q_n)^\intercal$ and $P:=(P_1, \dots, P_n)^\intercal$ are the new position and momentum operators, decouple the Hamiltonian \eqref{eq:ham}. In the new position and momentum operators, the transformed Hamiltonian reads
    \begin{equation}\label{eq:ham_dec}
        H = \frac{1}{2}  \sum_{i=1}^n \left( P_i^2 + \lambda_{i-1} Q_i^2 \right) .
    \end{equation}
\end{Proposition}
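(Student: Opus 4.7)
The plan is to reduce the claim to two standard facts: (i) the quadratic form coming from the edge sum is exactly the Laplacian quadratic form $q^{\intercal} L q$, and (ii) an orthogonal change of variables on $(q,p)$ leaves the canonical commutation relations invariant while orthogonally transforming both quadratic forms in $H$. Once both are in place, the diagonalisation of $L$ does the rest.

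First I would rewrite the potential term. Expanding $(q_i - q_j)^{2} = q_i^{2} - 2 q_i q_j + q_j^{2}$ and summing over edges gives $\sum_{\{i,j\}\in\mathcal{E}}(q_i-q_j)^{2} = q^{\intercal} L q$, where $L = D - A$ is the Laplacian; this is the standard identity recalled in the algebraic graph theory appendix. The momentum term is already $p^{\intercal} p$, so
\begin{equation}
    H = \frac{1}{2}\left(p^{\intercal} p + q^{\intercal} L q\right).
\end{equation}
Since positions commute with positions and momenta with momenta, the ordering inside $q^{\intercal} L q$ is unambiguous.

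Next I would check that \eqref{eq:can_tran} is a canonical transformation. Writing $Q_i = \sum_k S_{ki} q_k$ and $P_j = \sum_l S_{lj} p_l$ and using the canonical commutation relations together with the orthogonality $S^{\intercal} S = \id$, one gets
\begin{equation}
    [Q_i, P_j] = \sum_{k,l} S_{ki} S_{lj}[q_k,p_l] = \mathrm{i}\sum_{k} S_{ki} S_{kj} = \mathrm{i}\,\delta_{ij},
\end{equation}
and analogously $[Q_i,Q_j] = [P_i,P_j] = 0$. In particular, products such as $P_i^{2}$ and $Q_i^{2}$ are well-defined with no operator-ordering ambiguity.

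Finally I would substitute. Because $S$ is orthogonal, $p^{\intercal} p = p^{\intercal} S S^{\intercal} p = P^{\intercal} P$, and by hypothesis $S^{\intercal} L S = \Lambda$, so $q^{\intercal} L q = Q^{\intercal} \Lambda Q = \sum_{i=1}^{n} \lambda_{i-1} Q_i^{2}$. Combining these gives exactly \eqref{eq:ham_dec}. The only step that needs genuine care, and which I expect to be the main (mild) obstacle, is the canonical-transformation check: one has to be careful that $S$ is a \emph{real} orthogonal matrix so that $S^{\intercal}$ coincides with the Hermitian adjoint and the transformed operators $Q_i, P_i$ remain Hermitian. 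This is guaranteed by Lemma \ref{lm:orth}, which provides a real orthogonal diagonaliser since $L$ is real symmetric.
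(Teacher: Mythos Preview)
Your proof is correct and follows essentially the same route as the paper: both rewrite the potential as the Laplacian quadratic form $q^{\intercal} L q$, verify the canonical commutation relations via the orthogonality of $S$, and then substitute to diagonalise. Your extra remark about $S$ being real (so that the new operators stay Hermitian) is a nice addition but does not change the argument.
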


\begin{proof}
    %Let us recall that since the matrix $L$ is real and symmetric, then there exist a orthogonal transformation $S$ that brings $L$ into diagonal form, $ S^\intercal L S  =\Lambda=\textup{diag}(\lambda_0, \dots, \lambda_{n-1})$.
    Canonical transformations are defined as those transformations that preserve the commutation relations \cite{anderson1994canonical, blaszak2013canonical}. Component-wise, the transformation \eqref{eq:can_tran} reads 
    \begin{equation}
    \begin{aligned}
       Q_i(q,p) &= \sum_{k=1}^n S^\intercal_{ik} q_k  ,\\
       P_i(q,p) &=  \sum_{k=1}^n S^\intercal_{ik} p_k .
    \end{aligned}      
    \end{equation}
    So, we can check the new commutator,
    \begin{align}
        [Q_i, P_j] &= \left(\sum_{k=1}^n S^\intercal_{ik} q_k\right) \left(\sum_{l=1}^n S^\intercal_{jl} p_l\right) - \left(\sum_{l=1}^n S^\intercal_{jl} p_l\right) \left(\sum_{k=1}^n S^\intercal_{ik} q_k\right) \\
                   &= \sum_{k,l=1}^n S^\intercal_{ik}S^\intercal_{jl} \left( q_k p_l - p_l q_k \right) \\
                   &= \sum_{k,l=1}^n S^\intercal_{ik}S_{lj} [q_k,p_l] \\
                   &= \mathrm{i} \sum_{k,l=1}^n S^\intercal_{ik}S_{lj} \delta_{kl} \\
                   &= \mathrm{i} \sum_{k=1}^n S^\intercal_{ik}S_{kj} \\
                   &= \mathrm{i} \delta_{ij} ,
    \end{align}
    where we used the fact that $S^\intercal_{ij}=S_{ji}$, together with its orthogonality. The other commutation relations $[Q_i, Q_j]=[P_i, P_j]=0$ are trivial to check. So, the transformation is canonical.

    Notice that the Hamiltonian \eqref{eq:ham} can be written using the vectorial notation,
    \begin{equation}
        H=\frac{1}{2}  \left(  p*p + q * (Lq) \right) .
    \end{equation}
    Then, applying the transformations \eqref{eq:can_tran}, we obtain
    \begin{align}
         H&=\frac{1}{2}  \left(  \left(S P\right)*\left(S P\right) + \left(S Q\right) * \left(L S Q\right) \right)  \\
            &= \frac{1}{2}  \left(   P * \left( S^\intercal S P\right) + Q * \left(S^\intercal L S Q\right) \right) \\
             &= \frac{1}{2}  \left(   P * P + Q * \left(\Lambda Q\right) \right) .
    \end{align}
    The statement and \eqref{eq:ham_dec} are retrieved.
\end{proof}

\begin{Remark}
    Notice that \eqref{eq:can_tran} has the form of a \emph{classical} canonical transformation. In fact, transformation \eqref{eq:can_tran} is a linear composition of the operators $q_1, \dots, q_n$ and $p_1, \dots, p_n$, respectively, and there is no transformation acting on the individual operators alone.
\end{Remark}

\begin{Proposition}
    Let $P^\textup{tot}$, $a_i$, and $a^\dagger_i$ be the operators defined as follows
    \begin{equation}\label{eq:p_and_a}
        \begin{aligned}
        P^\textup{tot} &:= P_1 , \\
        a_i &:= \frac{1}{\sqrt{2}}\left( \sqrt{\omega_i} Q_{i+1} + \frac{\mathrm{i}}{\sqrt{\omega_i}} P_{i+1} \right) , \\
        a^\dagger_i &:= \frac{1}{\sqrt{2}}\left( \sqrt{\omega_i} Q_{i+1} - \frac{\mathrm{i}}{\sqrt{\omega_i}} P_{i+1} \right) ,
    \end{aligned}
    \end{equation}
    where $\omega_i :=\sqrt{\lambda_i}$, $i=1, \dots, n-1$, are called \emph{frequencies}. Then the Hamiltonian \eqref{eq:ham_dec} can be rewritten in the \emph{diagonal} form
    \begin{equation}\label{eq:ham_diag}
        H= \frac{{P^\textup{tot}}^2}{2}  + \sum_{i=1}^{n-1}  \omega_i \left( a^\dagger_i  a_i + \frac{1}{2} \right).
    \end{equation}
\end{Proposition}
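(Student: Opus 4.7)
The plan is to split the sum in the decoupled Hamiltonian \eqref{eq:ham_dec} into the $i=1$ term and the remaining $i=2,\dots,n$ terms, handling them separately. The key structural input is that for a connected graph the Laplacian $L$ has $\lambda_0=0$ as its smallest eigenvalue, and this is precisely the zero frequency mode that will become the free (center-of-mass type) kinetic term $(P^{\text{tot}})^2/2$.

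First I would isolate the $i=1$ contribution. Since $\lambda_0=0$, the $i=1$ summand in \eqref{eq:ham_dec} reduces to $\tfrac{1}{2}P_1^2$, and by definition $P^{\text{tot}}=P_1$, so this piece is exactly $\tfrac{1}{2}(P^{\text{tot}})^2$. For the remaining indices it is convenient to relabel $i=2,\dots,n$ as $i+1$ with $i=1,\dots,n-1$, so that the sum becomes
\begin{equation}
    \frac{1}{2}\sum_{i=1}^{n-1}\left(P_{i+1}^2+\lambda_i\,Q_{i+1}^2\right),
\end{equation}
with $\lambda_i>0$ (again by connectedness of $\mathcal{G}$), which legitimises $\omega_i=\sqrt{\lambda_i}$.

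Next I would express the bracketed expression in terms of $a_i$ and $a_i^{\dagger}$. Expanding the product using the definitions \eqref{eq:p_and_a} and remembering that $Q_{i+1}$ and $P_{i+1}$ do not commute, I obtain
\begin{equation}
    a_i^{\dagger}a_i=\frac{1}{2}\left(\omega_i Q_{i+1}^2+\frac{1}{\omega_i}P_{i+1}^2+\mathrm{i}\,[Q_{i+1},P_{i+1}]\right).
\end{equation}
Since the transformation \eqref{eq:can_tran} was shown to be canonical in the previous proposition, $[Q_{i+1},P_{i+1}]=\mathrm{i}$, and the commutator term contributes $-\tfrac{1}{2}$. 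Multiplying through by $\omega_i$ and adding the resulting constant gives the clean identity
\begin{equation}
    \omega_i\!\left(a_i^{\dagger}a_i+\frac{1}{2}\right)=\frac{1}{2}\left(P_{i+1}^2+\lambda_i Q_{i+1}^2\right).
\end{equation}

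Summing this identity over $i=1,\dots,n-1$ and combining with the isolated $i=1$ piece yields \eqref{eq:ham_diag}. There is no real obstacle here: the only conceptually delicate point is the use of $\lambda_0=0$, which is what singles out $P_1$ as a genuinely free momentum and forces the index shift in the definition of the ladder operators. Everything else is the standard quantum harmonic oscillator computation, made rigorous by the canonical commutation relation that was already established.
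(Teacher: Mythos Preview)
Your proof is correct and follows essentially the same approach as the paper, which simply states that substituting the relations \eqref{eq:p_and_a} into \eqref{eq:ham_diag} recovers \eqref{eq:ham_dec}. You have merely written out this substitution in detail, including the explicit handling of the $\lambda_0=0$ mode and the commutator term, which the paper leaves implicit.
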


\begin{proof}
    Starting from the Hamiltonian \eqref{eq:ham_diag} and by using the relations \eqref{eq:p_and_a} one obtains \eqref{eq:ham_dec}. 
\end{proof}

    Following from Proposition \ref{prop:laplacian_properties} in Appendix \ref{app:graph}, the first eigenvalue of the Laplacian is zero, $\lambda_0=0$. Then, the operators $a_i$, $a^\dagger_i$, are not well defined for $i=0$. So, for the operator associated with the zero eigenvalue of $L$, we need a specific definition, i.e., $ P^\textup{tot} := P_1$. Notice that $P^\textup{tot}$ is not associated with any oscillation, but it is the total momentum of the system; see Appendix \ref{app:ort} for more details. This is a consequence of the fact that the whole system can freely move in space. One can notice from \eqref{eq:ham_diag} that $P^\textup{tot}$ is a \emph{conserved quantity}. A discussion on the conserved quantities of the system and their relations to different coordinate systems is given in Appendix \ref{app:ort}.

\begin{Corollary}
    The operators $a_i$, $a^\dagger_i$ satisfy the commutation relations
    \begin{equation}\label{eq:com_a_adagger}    
         [a_i,a^\dagger_j] = \delta_{ij} , \quad
         [a_i,a_j] = 0 ,  \quad
         [a^\dagger_i,a^\dagger_j] =0 , 
    \end{equation}
    for all $i,j=1, \dots, n-1$.
\end{Corollary}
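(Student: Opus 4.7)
The plan is to reduce the three commutators among the $a_i, a^\dagger_i$ to the canonical commutation relations among the $Q_i, P_j$, which were already verified in the proof of the previous proposition (the canonical transformation \eqref{eq:can_tran} preserved $[Q_i,P_j]=\mathrm{i}\delta_{ij}$ and the vanishing of $[Q_i,Q_j]$, $[P_i,P_j]$). Since the definitions in \eqref{eq:p_and_a} express $a_i$ and $a^\dagger_i$ as real linear combinations (with coefficients built from $\omega_i$ and $\mathrm{i}/\sqrt{\omega_i}$) of $Q_{i+1}$ and $P_{i+1}$, everything will follow by bilinearity of the commutator.

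Concretely, I would first compute $[a_i,a^\dagger_j]$. Expanding,
\begin{equation}
[a_i,a^\dagger_j]=\tfrac{1}{2}\Bigl(\sqrt{\omega_i\omega_j}[Q_{i+1},Q_{j+1}]-\mathrm{i}\sqrt{\omega_i/\omega_j}[Q_{i+1},P_{j+1}]+\mathrm{i}\sqrt{\omega_j/\omega_i}[P_{i+1},Q_{j+1}]+\tfrac{1}{\sqrt{\omega_i\omega_j}}[P_{i+1},P_{j+1}]\Bigr).
\end{equation}
The $[Q,Q]$ and $[P,P]$ terms drop out, the two cross terms each contribute $\tfrac{1}{2}\delta_{ij}$ after using $[Q_{i+1},P_{j+1}]=\mathrm{i}\delta_{i+1,j+1}=\mathrm{i}\delta_{ij}$ (noting $\omega_i=\omega_j$ on the diagonal so the $\sqrt{\omega_i/\omega_j}$ factors become $1$), giving $[a_i,a^\dagger_j]=\delta_{ij}$.

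For $[a_i,a_j]$ and $[a^\dagger_i,a^\dagger_j]$ the same expansion yields two surviving terms with the \emph{same} sign pattern in the $\sqrt{\omega_i/\omega_j}$ factors, so after substituting $[Q_{i+1},P_{j+1}]=\mathrm{i}\delta_{ij}$ they cancel against each other and the commutators vanish. The only thing to take minor care with is that the index shift $i\mapsto i+1$ in \eqref{eq:p_and_a} means $\delta_{i+1,j+1}=\delta_{ij}$, but this is cosmetic.

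There is no real obstacle here; the statement is a direct bilinear computation of the same flavour as the verification of $[Q_i,P_j]=\mathrm{i}\delta_{ij}$ already carried out in the previous proof. I would therefore keep the proof quite short, essentially spelling out the expansion for $[a_i,a^\dagger_j]$ and remarking that the remaining two identities follow analogously with the $\mathrm{i}/\sqrt{\omega_i}$ signs giving cancellation instead of addition.
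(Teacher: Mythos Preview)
Your proposal is correct and is exactly the approach the paper takes: its proof is the single sentence ``By using the expressions \eqref{eq:p_and_a}, and the canonical commutation relations for $Q_i$ and $P_i$, we retrieve the commutation relations \eqref{eq:com_a_adagger}.'' You have simply spelled out that computation explicitly, and everything checks out.
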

\begin{proof}
    By using the expressions \eqref{eq:p_and_a}, and the canonical commutation relations for $Q_i$ and $P_i$, we retrieve the commutation relations \eqref{eq:com_a_adagger}. 
\end{proof}
\begin{Remark}
    Given the relations \eqref{eq:com_a_adagger} we identify $a^\dagger_i$ and  $a_i$, $i=1, \dots, n-1$, with the \emph{creation} and \emph{annihilation} operators of a bosonic system. Therefore, the Hamiltonian \eqref{eq:ham_diag} can be rewritten as
    \begin{equation}
        H= \frac{{P^\textup{tot}}^2}{2}  + \sum_{i=1}^{n-1}  \omega_i \left( N_i + \frac{1}{2} \right) ,
    \end{equation}
    where $N_i:= a^\dagger_i  a_i $ is the \emph{number} operator associated to the $i$-th frequency.
\end{Remark}

Actually, the operator $a^\dagger_i$ creates a \emph{phonon} with frequency $\omega_i$, and vice-versa $a_i$ destroys a phonon with the same frequency. The eigenvalues of $N_i$ are natural numbers representing the number of phonons with frequency $\omega_i$, $i=1, \dots, n-1$. It is also convenient to consider the \emph{centre of momentum} reference frame, i.e., $P^\textup{tot}=0$, where the Hamiltonian reads
    \begin{equation}\label{eq:ham_cm}
        H=  \sum_{i=1}^{n-1}  \omega_i \left( N_i + \frac{1}{2} \right) .
    \end{equation}

%\rb{Can we change reference frame to the center of momentum so that $P^\textup{tot}=0$? Notice that is quite easy to show that $P^\textup{tot}$ is a conserved quantity. Maybe write some considerations on this thing. }
\section{Phase Transitions}\label{eq:phase_transitions}

The statistical properties of a system with Hamiltonian \eqref{eq:ham_cm} are the ones of an \emph{ideal Bose gas}; in particular, we have a gas of phonons \cite{pathria2016statistical, srivastava2022physics, srivastava2005statistical}. In thermal equilibrium, for each frequency $\omega_i$ the \emph{average number of phonons} with such a frequency is given by
\begin{equation}\label{eq:av_ni}
    \langle N_i \rangle = \frac{1}{e^{\beta \omega_i} -1} ,
\end{equation}
where $\beta \in \mathbb{R}_+$ is a positive-real parameter physically related to the inverse of the temperature, $\beta= (k_b T)^{-1}$, where $k_b$ is the Boltzmann constant, and it is set to one, $k_b=1$. So, the average of the \emph{total number of phonons} is obtained by summing over all the possible frequencies
\begin{equation}\label{eq:av_N}
    \langle N \rangle =  \sum_{i=1}^{n-1}  \langle N_i \rangle = \sum_{i=1}^{n-1} \frac{1}{e^{\beta \omega_i} -1} .
\end{equation}
From \eqref{eq:av_ni} follows that the average energy of the system with Hamiltonian \eqref{eq:ham_cm} is
\begin{equation}
     \langle H \rangle =  \sum_{i=1}^{n-1} \frac{\omega_i}{e^{\beta \omega_i} -1} +  \sum_{i=1}^{n-1} \frac{\omega_i}{2} .
\end{equation}
Consequently, we can compute the \emph{specific heat} \cite{pathria2016statistical}
\begin{equation}\label{eq:specific_heat}
  c(\beta) = -\beta^2 \frac{\partial \langle H \rangle}{\partial \beta} = \sum_{i=1}^{n-1} (\beta \omega_i)^2 \frac{e^{\beta \omega_i}}{(e^{\beta \omega_i} -1)^2} .
\end{equation}

The two thermodynamic quantities of our interest are the average of the total number of phonons $\langle N \rangle$ and the specific heat, $c(\beta)$. We remark on the importance of these two quantities. We recall that quantum effects become relevant when the quantity 
\begin{equation}\label{eq:density_lambda}
    \frac{  \langle N \rangle}{V} \lambda ,
\end{equation}
where $V$ is the volume occupied by the particles, and $\lambda=h/\sqrt{2\pi m k_b T} = \sqrt{2 \pi \beta}$ is the mean thermal wavelength, is of order one \cite{pathria2016statistical}. Notice that since we are considering particles in a one-dimensional space, the volume is actually a length, and then the wavelength $\lambda$ appears with power one; for particles in a three-dimensional space, one has to adapt the formula using $\lambda^3$. So, roughly speaking, the quantity $\langle N \rangle$ allows us to have an insight on the regime of the system. In turn, this will lead to a different behaviour of the specific heat $c(\beta)$, whose importance is evident. Indeed, historically, the specific heat has been the benchmark for the thermodynamic theory of solids' vibrations \cite{srivastava2005statistical}.

%\rb{here maybe comments on ground state energy, rescaling and binding energy (which is negative and grater than the ground state energy.)}

%\rb{Maybe later it is interesting to compute also the specific heat given by the equation, see\cite{pathria2016statistical}
%\begin{equation}
%  c(\beta) = -\beta^2 \frac{\partial \langle H \rangle}{\partial \beta} = \sum_{i=1}^{n-1} (\beta \omega_i)^2 \frac{e^{\beta \omega_i}}{(e^{\beta \omega_i} -1)^2}
%\end{equation}
%}

\subsection{Bound for the Total Number of Phonons}

We introduce the concept of \emph{Wiener capacity} for the vertices of a graph, which will be used in later statements.

\begin{Definition}
    Let $\mathbf{e}_1, \dots, \mathbf{e}_n$ be the \emph{standard basis} in $\mathbb{R}^n$, i.e., $(\mathbf{e}_i)_j=\delta_{ij}$. Let $\mathcal{G}= \{ \mathcal{V}, \mathcal{E}\}$ be a connected simple graph. We call \emph{equilibrium measure} of the set $\mathcal{V} \setminus \{i\}$ the unique solution of the linear system
    \begin{equation}
        L \mathbf{v}_i = \mathbf{1} - n \mathbf{e}_i ,
    \end{equation}
    such that $(\mathbf{v}_i)_i=0$. The \emph{Wiener capacity} of the vertex $i$ is defined as
    \begin{equation}
        \textup{cap} (i) := \sum_{j=1}^n (\mathbf{v}_i)_j .
    \end{equation}
\end{Definition}

\begin{Proposition}\label{prop:bound_N}
    The total average number of phonons, $ \langle N \rangle$, in a system with graph structure $\mathcal{G}$ with $n$ vertices, and Hamiltonian \eqref{eq:ham_cm}, satisfies the inequality
    \begin{equation}\label{eq:bound_N}
       \langle N \rangle \leq \alpha_\textup{N}(2) \frac{\overline{\textup{cap}}(\mathcal{G})}{ \beta^2 n} ,
    \end{equation}
    where $\overline{\textup{cap}}(\mathcal{G}) := 1/n \sum_{i=1}^n \textup{cap} (i)$ is the average Wiener capacity of the vertices of $\mathcal{G}$, and $\alpha_\textup{N}(2) \simeq 0.648$, see \eqref{eq:alpha_N}.
\end{Proposition}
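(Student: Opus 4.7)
The plan is to combine two ingredients: (i) a spectral identity recasting the average Wiener capacity as $\overline{\textup{cap}}(\mathcal{G}) = n\sum_{i=1}^{n-1}\lambda_i^{-1}$, and (ii) a sharp pointwise upper bound of the form $1/(e^x-1) \leq \alpha_\textup{N}(2)/x^2$ on the Bose--Einstein occupation. Given both, applying the pointwise bound to each $x_i = \beta\omega_i = \beta\sqrt{\lambda_i}$ and summing over $i = 1,\dots,n-1$ gives \eqref{eq:bound_N} immediately.

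For the spectral identity, let $L^+$ denote the Moore--Penrose pseudoinverse of $L$. Since $\ker L = \Span\{\mathbf{1}\}$ and the right-hand side $\mathbf{1} - n\mathbf{e}_i$ has zero mean (so it lies in the range of $L$), the general solution of $L\mathbf{v}_i = \mathbf{1} - n\mathbf{e}_i$ is $\mathbf{v}_i = -n L^+\mathbf{e}_i + c_i\mathbf{1}$. The normalisation $(\mathbf{v}_i)_i = 0$ then forces $c_i = n(L^+)_{ii}$. Summing components of $\mathbf{v}_i$ and using $\mathbf{1}^\intercal L^+ = 0$ (which follows from the symmetry of $L^+$ together with $L^+\mathbf{1} = 0$) yields $\textup{cap}(i) = n^2 (L^+)_{ii}$. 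Averaging over $i$ and recognising $\textup{tr}(L^+) = \sum_{i=1}^{n-1}\lambda_i^{-1}$ produces the claimed identity.

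For the pointwise bound, I would set $\alpha_\textup{N}(2) := \sup_{x>0} x^2/(e^x-1)$ and show it is finite and attained. The function $f(x) = x^2/(e^x-1)$ is continuous on $(0,\infty)$, satisfies $f(x) \sim x \to 0$ as $x \to 0^+$ (since $e^x-1 \sim x$), and decays exponentially as $x \to \infty$, so the supremum is attained at the unique interior critical point solving $f'(x) = 0$, i.e.\ $2(e^x - 1) = x e^x$; numerically this gives $\alpha_\textup{N}(2) \simeq 0.648$. By construction $1/(e^x-1) \leq \alpha_\textup{N}(2)/x^2$ for every $x > 0$, and applying this termwise yields
\begin{equation}
\langle N \rangle = \sum_{i=1}^{n-1}\frac{1}{e^{\beta\sqrt{\lambda_i}}-1} \;\leq\; \alpha_\textup{N}(2)\sum_{i=1}^{n-1}\frac{1}{\beta^2\lambda_i} \;=\; \frac{\alpha_\textup{N}(2)}{\beta^2}\cdot\frac{\overline{\textup{cap}}(\mathcal{G})}{n}.
\end{equation}

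The main obstacle is the spectral identity relating $\textup{cap}(i)$ to the diagonal of $L^+$: one has to handle the one-dimensional kernel of $L$ and the non-standard normalisation $(\mathbf{v}_i)_i = 0$ carefully to avoid a spurious contribution from $\Span\{\mathbf{1}\}$. Once that identity is in hand, the remaining step is a standard calculus optimisation followed by termwise summation, with the constant $\alpha_\textup{N}(2)$ being \emph{sharp} for the pointwise bound used.
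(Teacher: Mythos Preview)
Your proof is correct and follows essentially the same route as the paper: apply the pointwise bound $1/(e^x-1)\le \alpha_\textup{N}(2)/x^2$ termwise to $\langle N\rangle$, then identify $\sum_{i=1}^{n-1}\lambda_i^{-1}$ with $\overline{\textup{cap}}(\mathcal G)/n$. The only difference is that the paper cites the identity $\overline{\textup{cap}}(\mathcal G)=n\sum_{i=1}^{n-1}\lambda_i^{-1}$ (via the Kirchhoff index $R(\mathcal G)$) from the literature, whereas you derive it directly from the pseudoinverse by solving $L\mathbf v_i=\mathbf 1-n\mathbf e_i$ and reading off $\textup{cap}(i)=n^2(L^+)_{ii}$; your self-contained derivation is a nice bonus and matches the paper's remark \eqref{eq:cap_vs_Lplus}.
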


\begin{proof}
    We consider the inequality
    \begin{equation}\label{eq:in}
        \frac{1}{e^x -1} \leq  \frac{\alpha_\textup{N}(2)}{x^2} ,
    \end{equation}
    which follows from \eqref{eq:bound_function_N} proven in Appendix \ref{app:bounds}.
    %which, for $x>0$, is equivalent to $e^x>x^2 +1$. Let us consider the series of the exponential $\sum_{k=0}^\infty x^k /k!$. We notice that if we prove the inequality for a truncation of the series then the former inequality holds, since every term of the sum is positive for $x>0$. Considering the expansion up to third order is enough. In fact, we have
    %\begin{align}
    %    x+ \frac{x^2}{2} + \frac{x^3}{6} &>x^2  ,\\
    %    x \left(1 - \frac{x}{2} + \frac{x^2}{6} \right) &> 0 .
    %\end{align}
    %Then, since $x>0$, one can look at $\left(1 - x/2 + x^2/6 \right) > 0$, which is satisfied for all positive $x$.
    We apply \eqref{eq:in} to the total average number of phonons \eqref{eq:av_N},
    \begin{equation}\label{eq:pippo}
           \langle N \rangle \leq  \alpha_\textup{N}(2) \sum_{i=1}^{n-1} \frac{1}{(\beta \omega_i)^2} = \frac{\alpha_\textup{N}(2)}{\beta^2} \sum_{i=1}^{n-1} \frac{1}{ \lambda_i} ,
    \end{equation}
    where $\lambda_i$, $i=1,\dots, n-1$, are the non-zero eigenvalues of the Laplacian of $\mathcal{G}$. The quantity
    \begin{equation}\label{eq:xiao}
       R (\mathcal{G})= n \sum_{i=1}^{n-1} \frac{1}{ \lambda_i}
    \end{equation}
    is known as \emph{Kirchhoff index}, or \emph{resistance}, of a connected graph \cite{xiao2003resistance}. Moreover there is the following relation between the Kirchhoff index and the Wiener capacity \cite{bendito2008formula, zhu1996extensions, gutman1996quasi}
    \begin{equation}\label{eq:bend}
        R (\mathcal{G}) = \frac{1}{n} \sum_{i=1}^n \textup{cap} (i) = \overline{\textup{cap}}(\mathcal{G}).
    \end{equation}
    Combining together \eqref{eq:pippo}, \eqref{eq:xiao}, and \eqref{eq:bend}, we retrieve the statement.
\end{proof}

\begin{Remark}
    Given \eqref{eq:bend} one can rewrite \eqref{eq:bound_N} in terms of the Kirchhoff index, or resistance, $ \beta^2  \langle N \rangle \leq \alpha_\textup{N}(2)  R (\mathcal{G})/n$. Using the average Wiener capacity has the interesting aspect of relating averages over the quantum statistic to averages over graph properties. Moreover, the following relation between the average Wiener capacity and the Moore-Penrose inverse of the Laplacian matrix, see Corollary \ref{cor:mp_l} and \cite{gutman1996quasi}, holds 
    \begin{equation}\label{eq:cap_vs_Lplus}
        \overline{\textup{cap}}(\mathcal{G}) = n \textup{tr}(L^+) .
    \end{equation}
\end{Remark}

Proposition \ref{prop:bound_N} relates thermodynamic quantities, temperature, and number of phonons to graph properties encoding the topological structure of the system: number of vertices and average Wiener capacity of $\mathcal{G}$. Let us recall that the number of vertices of the graph is equal to the number of atomic (or molecular) sites of the system.
%
%As a consequence of the bound \eqref{eq:bound_N} we state the following.
%
%\begin{Corollary}\label{cor:cap_prop}
%    Let $n_1, n_2, \dots$ be an sequence of increasing positive natural numbers. Let $\mathcal{G}_{n_1}, \mathcal{G}_{n_2}, \dots$ be a sequence of graphs, where the labels $n_1 < n_2 < \dots $ are the number of nodes of the respective graph. Then, if
%    \begin{equation}\label{eq:cap_prop}
%       \overline{\textup{cap}} (\mathcal{G}_{n_j}) \propto n_j , 
%    \end{equation}
%    where $\propto$ denotes the proportionality sign,  the total average number of phonons, $ \langle N \rangle$ is bounded for all $n_j$, $j = 1, 2, \dots$. 
%\end{Corollary}
%
%\begin{proof}
%    The statement follows directly from \eqref{eq:bound_N}.
%\end{proof}
%
%We notice that Corollary \ref{cor:cap_prop} provide a bifurcation criterion for the quantity $ \langle N \rangle$. 
%
Network structures having an average Wiener capacity proportional to the number of atomic sites will have a bounded number of phonons regardless of the size of the graph. On the other hand, systems having an average Wiener capacity proportional to a power of the number of vertices have an increasing, and eventually divergent, number of phonons, independently of the temperature. We show a couple of examples to illustrate the two different behaviours.
\begin{Example}
    Let $K_n$ be the complete graph with $n$ vertices. The average Wiener capacity is given by $\overline{\textup{cap}} (K_n) = n-1 $, so we have
\begin{equation}
    \frac{\beta^2  \langle N \rangle}{\alpha_\textup{N}(2) } \leq   1 - \frac{1}{n} ,
\end{equation}
which implies $\beta^2  \langle N \rangle <   \alpha_\textup{N}(2)  $, for all $n = 2, 3, \dots$.
\end{Example}
\begin{Example}
    Let $P_n$ be the path graph with $n$ vertices. The average Wiener capacity is given by $\overline{\textup{cap}} (P_n) = (n-1) n (n+1)/6 $, so we obtain
%Instead, if we consider for example a path graph, $P_n$, having $\overline{\textup{cap}} (P_n) = 1/6 (n-1) n (n+1) $, we obtain
\begin{equation}
    \frac{\beta^2  \langle N \rangle}{\alpha_\textup{N}(2) } \leq   \frac{n^2-1}{6} ,
\end{equation}
where the bound can become arbitrarily large when the number of atomic sites increases.
\end{Example}
So, there is a phase transition in terms of graph's classes between finite and arbitrarily large bounds on the total average number of phonons. A precise statement is given later in Proposition \ref{prop:criteria}, together with a result for the specific heat. 

Let us recall that the average Wiener capacity is equal to the resistance of a graph. The graph's resistance arose as a concept in the study of electrical networks, where the edges are identified as resistors. As one might expect, the resistance of a graph gives a measure of how much resistance is opposed to the flow of electrical current in the network. Intuitively, we can, therefore, expect that in a graph with many connections among the vertices, the flow of current will face less resistance than in a graph with few connections. This is reflected in what we have seen in the previous examples. The complete graph has a direct connection among any two vertices; as such, the resistance opposed is minimal. On the other hand, for a path graph, the flow of current among two vertices may need to cross many edges, i.e., resistances; therefore, the resistance of the network is greater. Another physical intuition of what is happening comes from the former Hamiltonian \eqref{eq:ham}, which classically can be thought of as a network of harmonic oscillators. In such a context, let us consider the complete graph. Increasing the number of nodes, each vertex will have an increasing number of connections. At some point, the system will become stiffer, and stiffer, as the forces applied to each vertex will reduce the possibility of motion. Conversely, increasing the number of nodes for a loosely connected graph will give just more possible configurations for the oscillations. Quantum mechanically, the considerations we mentioned induce a different behaviour for the number of phonons in the system, which often are referred to as quanta of vibrations or quanta of sound.

\subsection{Bound for the Specific Heat}

In this section, we provide a bound for the specific heat \eqref{eq:specific_heat} at low temperatures, i.e. large $\beta$. Moreover, we state a Lemma exploiting a bifurcation criterion in terms of graph classes. 

%We compute the specific heat \eqref{eq:specific_heat}.

\begin{Proposition}\label{prop:specific_heat}
    Let $\mathcal{G}$ be a graph with $n$ vertices, and $\beta$ sufficiently large, i.e. $\beta^2 > \overline{\textup{cap}}(\mathcal{G})/ n $. Then the specific heat  \eqref{eq:specific_heat} of the system with Hamiltonian \eqref{eq:ham_cm}, satisfies the inequality
    \begin{equation}\label{eq:specific_heat_bound}
        c(\beta)  < \alpha_{\textup{E}}  \frac{\frac{\overline{\textup{cap}}(\mathcal{G})}{ \beta^2 n}}{1+\frac{\overline{\textup{cap}}(\mathcal{G})}{ \beta^2 n}} ,
    \end{equation}
    where $\overline{\textup{cap}}(\mathcal{G})$ is the average Wiener capacity of the vertices of $\mathcal{G}$, and $\alpha_{\textup{E}} \simeq 5.23$, see \eqref{eq:alpha_e}.
\end{Proposition}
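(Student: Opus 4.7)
The plan is to mirror the strategy used for Proposition \ref{prop:bound_N}: replace the Bose--Einstein occupation $1/(e^x-1)$ by the Einstein heat function, bound it pointwise by a rational expression chosen so that summation produces the Kirchhoff index (hence the average Wiener capacity), and then package the result in the form $A/(1+A)$ with $A := \overline{\textup{cap}}(\mathcal{G})/(n\beta^2)$.

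\paragraph{Step 1 (scalar bound).}
Introduce the Einstein function $h(x) := x^2 e^x/(e^x-1)^2$, so that
\begin{equation}
    c(\beta) = \sum_{i=1}^{n-1} h(\beta\omega_i).
\end{equation}
In Appendix \ref{app:bounds} I would establish the pointwise inequality $h(x)(1+x^2) \leq \alpha_{\textup{E}}$ for all $x>0$, where $\alpha_{\textup{E}} := \sup_{x>0}(1+x^2)h(x) \simeq 5.23$. This is the natural analogue of the inequality $(1/(e^x-1)) \cdot x^2 \leq \alpha_{\textup{N}}(2)$ used in the proof of \ref{prop:bound_N}: $h$ is monotonically decreasing, equals $1$ at the origin, and decays like $x^2 e^{-x}$ at infinity, so the supremum is attained at an interior point, which can be computed numerically to verify the claimed value of $\alpha_{\textup{E}}$, corresponding to \eqref{eq:alpha_e}.

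\paragraph{Step 2 (spectral reduction).}
Summing the pointwise bound with $x_i = \beta\omega_i = \beta\sqrt{\lambda_i}$ yields
\begin{equation}
    c(\beta) \leq \alpha_{\textup{E}} \sum_{i=1}^{n-1} \frac{1}{1 + \beta^2\lambda_i}.
\end{equation}
Then, exactly as in \eqref{eq:xiao}--\eqref{eq:bend}, I would invoke the identity $\sum_{i=1}^{n-1} 1/\lambda_i = \overline{\textup{cap}}(\mathcal{G})/n$ to convert the spectral sum into the combined quantity $A$.

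\paragraph{Step 3 (recasting into $A/(1+A)$).}
The hypothesis $\beta^2 > \overline{\textup{cap}}(\mathcal{G})/n$ is exactly $A < 1$, and moreover forces $\beta^2 \lambda_i \geq 1/A > 1$ for every $i$, since $1/(\beta^2\lambda_i) \leq A$. Using the concavity of $y \mapsto y/(1+y)$ together with this low-temperature restriction, one should then sharpen the pointwise bound into a cooperative estimate of the form $h(x_i) \leq \alpha_{\textup{E}}/[x_i^2(1+A)]$; summing in $i$ then produces the announced bound $\alpha_{\textup{E}} A/(1+A)$, and the strictness of the inequality follows because the supremum in Step 1 is not attained simultaneously by every eigenvalue.

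\paragraph{Main obstacle.}
The delicate point is Step 3: the naive per-term bound $h(x)\leq \alpha_{\textup{E}}/(1+x^2)$ followed by summation gives a sum $\sum 1/(1+x_i^2)$ which, by the superadditivity of $y/(1+y)$ with respect to splitting, is actually \emph{larger} than $A/(1+A)$. The proof must therefore exploit the constraint $A<1$ in a global way, either by tightening the pointwise bound cooperatively or by a suitable rearrangement/Jensen-type argument tailored to the Einstein function. This is the step I expect to absorb most of the technical work.
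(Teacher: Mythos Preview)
Your Steps 1 and 2 are exactly what the paper does: it applies the Einstein-function bound $h(x)\le \alpha_{\textup{E}}/(1+x^2)$ from Appendix~\ref{app:bounds} term by term and arrives at
\[
c(\beta)\le \alpha_{\textup{E}}\sum_{i=1}^{n-1}\frac{1}{1+\beta^2\lambda_i}
= \frac{\alpha_{\textup{E}}}{\beta^2}\sum_{i=1}^{n-1}\frac{1}{\lambda_i+\beta^{-2}} .
\]
Your diagnosis of the obstacle is also correct: with $y_i=1/(\beta^2\lambda_i)$ and $A=\sum_i y_i$, subadditivity of $y\mapsto y/(1+y)$ gives $\sum_i y_i/(1+y_i)\ge A/(1+A)$, so the direct sum overshoots the target.

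Step 3 is where your proposal fails. The ``cooperative estimate'' $h(x_i)\le \alpha_{\textup{E}}/[x_i^2(1+A)]$ cannot be reached from $h(x)\le\alpha_{\textup{E}}/(1+x^2)$: that would require $1+x_i^2\ge x_i^2(1+A)$, i.e.\ $x_i^2\le 1/A$, whereas the inequality you just derived is $x_i^2=\beta^2\lambda_i\ge 1/A$. So the low-temperature restriction pushes each $x_i$ in exactly the \emph{wrong} direction, and concavity/Jensen applied to $y/(1+y)$ cannot help --- the subadditivity you flagged \emph{is} Jensen here.

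The paper crosses this gap by a different device. It sets $\epsilon=\beta^{-2}$, expands each $1/(\lambda_i+\epsilon)$ as a geometric series in $\epsilon/\lambda_i$, swaps the finite and infinite sums, and rewrites the result as $\sum_{k\ge 0}(-1)^k\epsilon^k\,\textup{tr}\bigl((L^+)^{k+1}\bigr)$. It then uses the elementary power-sum inequality $\textup{tr}\bigl((L^+)^{k+1}\bigr)=\sum_i 1/\lambda_i^{k+1}\le\bigl(\sum_i 1/\lambda_i\bigr)^{k+1}=\bigl(\textup{tr}\,L^+\bigr)^{k+1}$ to replace the coefficients, and resums the resulting geometric series $\sum_{k\ge0}(-1)^k(\epsilon\,\textup{tr}\,L^+)^k=1/(1+\epsilon\,\textup{tr}\,L^+)$; via \eqref{eq:cap_vs_Lplus} this is $1/(1+A)$, and the hypothesis $\beta^2>\overline{\textup{cap}}(\mathcal{G})/n$ enters precisely as the convergence condition $\epsilon\,\textup{tr}\,L^+<1$. (You may notice that bounding the coefficients of an \emph{alternating} series term-by-term by larger ones does not automatically bound the sum from above --- that is a subtlety in the paper's own argument --- but the series-expansion idea is nonetheless the mechanism it uses, and it is absent from your plan.)
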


\begin{proof}
    We use the bound \eqref{eq:bound_Einstein} found in Appendix \ref{app:bounds},
    \begin{equation}
         \frac{x^2 e^x}{(e^x -1)^2} \leq \frac{\alpha_\textup{E}}{x^2+1} .
    \end{equation}
    Then, for the specific heat \eqref{eq:specific_heat}, we obtain the relation
    \begin{equation}
        c(\beta) = \sum_{i=1}^{n-1} (\beta \omega_i)^2 \frac{e^{\beta \omega_i}}{(e^{\beta \omega_i} -1)^2} \leq \sum_{i=1}^{n-1}  \frac{\alpha_\textup{E}}{\beta^2 \lambda_i +1} ,
    \end{equation}
    where $\lambda_1, \dots, \lambda_{n-1}$ are the nonzero eigenvalues of the Laplacian matrix $L$. We collect $\beta^2$, obtaining
    \begin{equation}
        c(\beta)  \leq \frac{\alpha_\textup{E}}{\beta^2} \sum_{i=1}^{n-1}  \frac{1}{ \lambda_i +\beta^{-2}} .
    \end{equation}
    Now, we assume $\beta \gg 1$, and we define $\epsilon:=\beta^{-2}$, so that $\epsilon \ll 1$. Let us recall that for an alternating series we have that $\sum_{k=0}^\infty (-1)^k x^k = (1 + x )^{-1}$, for $|x|<1$. 
    So, for $\epsilon$ sufficiently small, we can write
    \begin{equation}
        \frac{1}{ \lambda_i +\epsilon} = \sum_{k=0}^\infty (-1)^k \frac{\epsilon^k}{\lambda_i^{k+1}} .
    \end{equation}
    We remark that $\epsilon$ sufficiently small means small enough to ensure absolute convergence of the series. Summing over the eigenvalues, we obtain 
    \begin{align}
         \sum_{i=1}^{n-1}  \frac{1}{ \lambda_i + \epsilon} &=   \sum_{i=1}^{n-1}  \sum_{k=0}^\infty (-1)^k \frac{\epsilon^k}{\lambda_i^{k+1}}  \\
         &=    \sum_{k=0}^\infty  \sum_{i=1}^{n-1} (-1)^k \frac{\epsilon^k}{\lambda_i^{k+1}}  \\
          &=    \sum_{k=0}^\infty  (-1)^k \epsilon^k \textup{tr}\left( \left(L^+\right)^{k+1} \right)
    \end{align}
    where $L^+$ is the Moore-Penrose inverse of $L$, see Corollary \ref{cor:mp_l}, and the exchange of the infinite series and the finite sum is possible thanks to the absolute convergence of the series. Since the eigenvalues $\lambda_1, \dots, \lambda_{n-1}$ are all positive we have that 
    \begin{equation}
        \textup{tr}\left( \left(L^+\right)^{k+1} \right) \leq     \left( \textup{tr} \left(L^+\right) \right)^{k+1} ,
    \end{equation}
    where the equality holds only for $k=0$, and for $k>1$ the inequality is strict. Then we have
    \begin{align}
         \sum_{i=1}^{n-1}  \frac{1}{ \lambda_i + \epsilon} < \sum_{k=0}^\infty  (-1)^k \epsilon^k    \left( \textup{tr} \left(L^+\right) \right)^{k+1} =  \textup{tr} \left(L^+\right)  \sum_{k=0}^\infty  (-1)^k     \left( \epsilon \textup{tr} \left(L^+\right) \right)^k .
    \end{align}
    By using \eqref{eq:cap_vs_Lplus}, and setting back $\epsilon=\beta^{-2}$ we have that 
    \begin{equation}
         \sum_{k=0}^\infty  (-1)^k     \left( \epsilon \textup{tr} \left(L^+\right) \right)^k =  \sum_{k=0}^\infty  (-1)^k     \left(  \frac{ \overline{\textup{cap}}(\mathcal{G})}{\beta^2 n} \right)^k . 
    \end{equation}
    At this point, we can give more quantitative estimates on the magnitude of $\beta$ necessary for absolute convergence. Indeed, requiring that the ratio between subsequent terms is strictly less than one, we obtain 
    \begin{align}\label{eq:cond_beta}
        \beta^2 > \frac{ \overline{\textup{cap}}(\mathcal{G})}{ n} .
    \end{align}
    Let us notice that if $\beta$ satisfies \eqref{eq:cond_beta}, then the previous convergence requirements are also satisfied. %Finally, for an alternating series we have that $\sum_{k=0}^\infty (-1)^k x^k = (1 + x )^{-1}$, for $|x|<1$, which is true for $\beta$ satisfying \eqref{eq:cond_beta}. 
    Finally, using the summation formula for an alternating series we retrieve the statement.
\end{proof}

Proposition \ref{prop:specific_heat} is valid under the assumption of a given graph with a fixed number of vertices, $n$, and a temperature low enough, $\beta^2 > \overline{\textup{cap}}(\mathcal{G})/ n $. Notice that the requirement on $\beta$ is related to the total average number of phonons \eqref{eq:bound_N}. Indeed, it can be read as that the total average number of phonons is small, namely $\langle N \rangle < \alpha_\textup{N}(2)$. In the previous section, we anticipated, with two examples, the occurrence of a phase transition in terms of graph classes for the average number of phonons. Now, we give a rigorous statement that also includes a consequence for the specific heat.

\begin{Proposition}\label{prop:criteria}
    Let $n_1, n_2, \dots$ be a strictly increasing sequence of positive natural numbers. Let $\mathcal{G}_{n_1}, \mathcal{G}_{n_2}, \dots$ be a sequence of graphs, where the labels $n_1 < n_2 < \dots $ are the number of nodes of the respective graph. Then, if $ \overline{\textup{cap}} (\mathcal{G}_{n_j})$ depends linearly on $n_j$ the total average number of phonons, $ \langle N \rangle$ is bounded for all $n_j$, $j = 1, 2, \dots$. Moreover, there exists a $\beta$ such that the specific heat is bounded for all $n_j$, $j = 1, 2, \dots$.
\end{Proposition}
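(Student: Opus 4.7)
The plan is to recognize that Proposition \ref{prop:criteria} is essentially a corollary of Propositions \ref{prop:bound_N} and \ref{prop:specific_heat}, once the hypothesis of linear growth is translated into uniform boundedness of the ratio $\overline{\textup{cap}}(\mathcal{G}_{n_j})/n_j$. Concretely, the assumption that $\overline{\textup{cap}}(\mathcal{G}_{n_j})$ depends linearly on $n_j$ means one can write $\overline{\textup{cap}}(\mathcal{G}_{n_j}) = c_1 n_j + c_0$ for constants $c_0, c_1 \geq 0$, so there exists $M > 0$ with
\begin{equation}
  \frac{\overline{\textup{cap}}(\mathcal{G}_{n_j})}{n_j} \leq M \quad \text{for every } j = 1, 2, \dots.
\end{equation}
Both bounds \eqref{eq:bound_N} and \eqref{eq:specific_heat_bound} are monotone increasing in this ratio, so this single observation drives the entire argument.

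For the first conclusion, I would apply Proposition \ref{prop:bound_N} to each term of the sequence, obtaining
\begin{equation}
  \langle N \rangle(\mathcal{G}_{n_j}) \leq \alpha_\textup{N}(2) \frac{\overline{\textup{cap}}(\mathcal{G}_{n_j})}{\beta^2 n_j} \leq \frac{\alpha_\textup{N}(2) M}{\beta^2},
\end{equation}
which is independent of $j$. Hence $\langle N \rangle$ is bounded uniformly along the sequence for any fixed $\beta > 0$.

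For the second conclusion, I would first choose $\beta$ large enough so that the hypothesis of Proposition \ref{prop:specific_heat}, namely $\beta^2 > \overline{\textup{cap}}(\mathcal{G}_{n_j})/n_j$, holds simultaneously for every $j$. The uniform bound $M$ makes this possible: any $\beta$ with $\beta^2 > M$ satisfies the requirement for all graphs in the sequence at once. With such a $\beta$ fixed, Proposition \ref{prop:specific_heat} applies to each $\mathcal{G}_{n_j}$, and using monotonicity of $x \mapsto x/(1+x)$,
\begin{equation}
  c(\beta) < \alpha_\textup{E} \frac{\overline{\textup{cap}}(\mathcal{G}_{n_j})/(\beta^2 n_j)}{1 + \overline{\textup{cap}}(\mathcal{G}_{n_j})/(\beta^2 n_j)} \leq \alpha_\textup{E} \frac{M/\beta^2}{1 + M/\beta^2},
\end{equation}
uniformly in $j$, completing the proof.

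There is no real technical obstacle: the content of the proposition lies entirely in Propositions \ref{prop:bound_N} and \ref{prop:specific_heat}. The only point requiring mild care is ordering the quantifiers in the second statement correctly, namely choosing $\beta$ once and for all after observing that $\sup_j \overline{\textup{cap}}(\mathcal{G}_{n_j})/n_j$ is finite, rather than a $\beta$ that depends on $j$.
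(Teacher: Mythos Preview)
Your proposal is correct and follows essentially the same route as the paper: deduce uniform boundedness of $\overline{\textup{cap}}(\mathcal{G}_{n_j})/n_j$ from the linearity hypothesis, invoke Proposition~\ref{prop:bound_N} for the phonon count, and then pick $\beta^2$ larger than the supremum of these ratios so that Proposition~\ref{prop:specific_heat} applies uniformly in $j$. If anything, your version is slightly more explicit than the paper's, since you spell out the monotonicity of $x\mapsto x/(1+x)$ to turn the $j$-dependent right-hand side of \eqref{eq:specific_heat_bound} into a single uniform constant.
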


\begin{proof}
   The first part is a direct consequence of Proposition \ref{prop:bound_N}. In particular, it follows from the fact that, if $ \overline{\textup{cap}} (\mathcal{G}_{n_j})$ is a linear function of $n_j$, the quantity
   \begin{equation}
       \frac{\overline{\textup{cap}}(\mathcal{G}_{n_j})}{ \beta^2 n_j}
   \end{equation}
   is finite and bounded for all $j = 1, 2, \dots$, and for all fixed $\beta \in \mathbb{R}_{>0}$. Consequently, from Proposition \ref{prop:specific_heat}, we have that for $\beta^2 > \overline{\textup{cap}}(\mathcal{G})/ n $ there exists a bound for the specific heat. Notice that, given the linearity property, the supremum
   \begin{equation}
       \sup_j \left(\frac{\overline{\textup{cap}}(\mathcal{G}_{n_j})}{ n_j} \right)
   \end{equation}
   is finite. Therefore by choosing $\beta^2 > \sup_j \left(\overline{\textup{cap}}(\mathcal{G}_{n_j}) / n_j \right) $ the bound \eqref{eq:specific_heat_bound} holds, for all $j = 1, 2, \dots$.
\end{proof}

Proposition \ref{prop:criteria} offers a practical bifurcation criterion based on the average Wiener capacity $\overline{\textup{cap}}(\mathcal{G})$ of a graph. By considering the number of vertices of a graph as a parameter, we can distinguish between families of network structures leading to finite bounds, and families leading to arbitrarily large bounds. 
\section{Numerical Validation}\label{sec:numerical}

In this section, we showcase, by numerical simulations, how the behaviour of the total average number of phonons, $\langle N \rangle$, and of the specific heat, $c(\beta)$, depends on the number of vertices, $n$, and connections, for a particular class of regular graphs. We also show a comparison between the bound \eqref{eq:specific_heat_bound} and the values given by the exact formula \eqref{eq:specific_heat} for the specific heat. 
When the dimension of a network becomes large, computing eigenvalues or other algebraic quantities becomes computationally demanding. For such a reason, in order to explore a range of dimension, $n$, that is not restricted to very small graph structures, we focus our numerical study on circulant graphs with $l$ nearest neighbours $Ci(n,l)$, see Appendix \ref{app:circ} for more precise definition and properties. Indeed, in Proposition \ref{prop:eig_circ}, we proved an explicit formula for the eigenvalues, which only depends on $n$ and $l$. Namely, for $j=1, \dots, n-1$, and $2l<n$, the eigenvalues of the Laplacian matrix for $Ci(n,l)$ are
\begin{equation}
    \lambda_j = 2l+1 - \frac{\sin{\left( \frac{j \pi }{n} (2 l+ 1) \right)}}{\sin{\left( \frac{j \pi }{n} \right)}}.
\end{equation}
The class of graphs we are considering allows us to study an important feature, the dependence of the thermodynamic quantities in terms of the number of connections in the graph, that, as we discussed earlier, is related to $\overline{\textup{cap}}(\mathcal{G})$, and therefore to the bounds we provided. In fact, we can choose a number of nearest neighbours that is a function of the number of nodes $l=l(n)$. 

In Figure \ref{fig:behaviour_vs_n}, we show the behaviour of $\langle N \rangle$ and $c(\beta)$, at fixed $\beta=1$, varying with respect to the number of vertices $n$. The number of neighbours is given by $l= \lfloor (n/2)^r \rfloor$, where $r$ is a power between $0$ and $1$, and $\lfloor \cdot \rfloor$ is the floor function. The power $r$ determines the degree of the vertices of the graph. The limit cases $r=0$ and $r=1$ correspond, respectively, to the cycle and complete graphs. We clearly see in Figure \ref{fig:behaviour_vs_n} that by raising the power $r$, and so increasing the network connections, the behaviour changes from increasing and divergent to eventually decreasing and convergent. One can compare the plots in Figure \ref{fig:behaviour_vs_n} with the behaviour of $\overline{\textup{cap}}(\mathcal{G})/n$ in Figure \ref{fig:cap_vs_n}. We recall that the quantity $\overline{\textup{cap}}(\mathcal{G})/n$ determines the bound for $\langle N \rangle$, \eqref{eq:bound_N}, but also the properties of the specific heat, see \eqref{eq:specific_heat_bound} and Proposition \ref{prop:criteria}. By looking at Figure \ref{fig:cap_vs_n}, we notice that the quantity $\overline{\textup{cap}}(\mathcal{G})/n$ provides a qualitative measure for the behaviour of the thermodynamic quantities in Figure \ref{fig:behaviour_vs_n}. Let us also notice that fixing $\beta=1$ does not alter the qualitative behaviour shown. In Figure \ref{fig:c_vs_l}, we show a further example of how the number of connections of a graph affects the specific heat: more connections, lower specific heat.
\begin{figure}[!ht]
     \centering
     \begin{subfigure}[b]{0.8\textwidth}
         \centering
         \includegraphics[width=\textwidth]{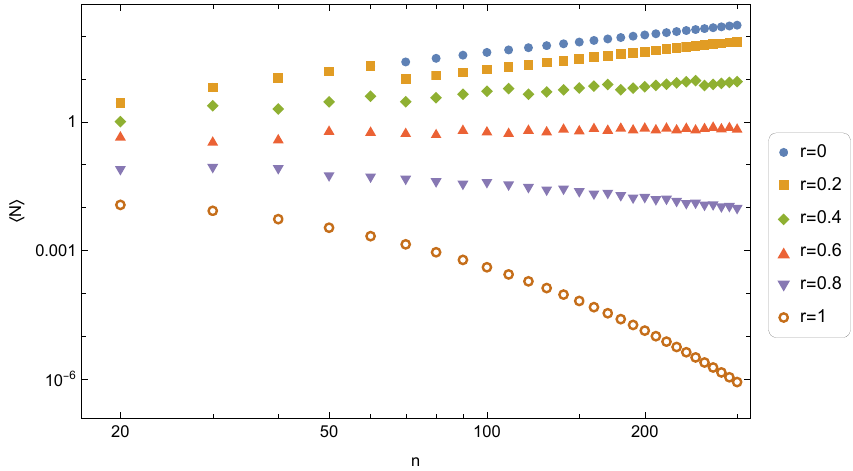}
         %\caption{ Total average number of phonons}
         %\label{fig:N_tot}
     \end{subfigure}
     \begin{subfigure}[b]{0.8\textwidth}
         \centering
         \includegraphics[width=\textwidth]{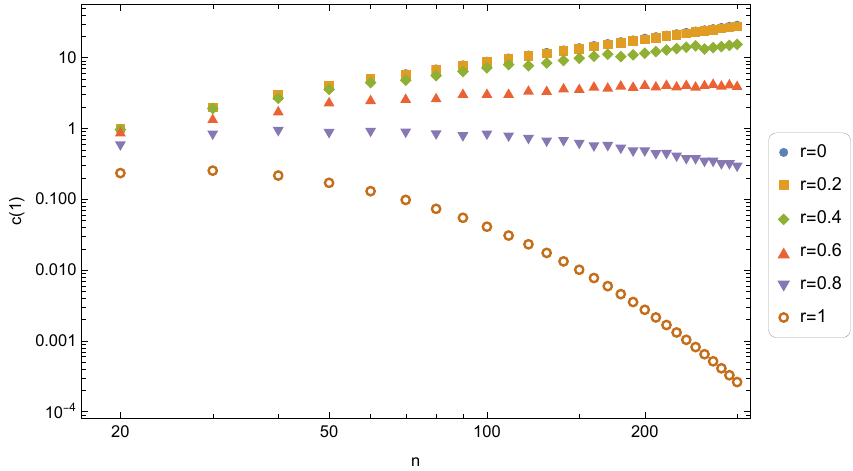}
         %\caption{ Specific heat}
         %\label{fig:c_vs_n}
     \end{subfigure}
        \caption{ Behaviour of $\langle N \rangle$ (top figure) and $c(\beta)$ (bottom figure) with respect to the number of vertices, $n$, of the network. The temperature is fixed, $\beta=1$. We consider circulant graphs, $Ci(n,l)$, with $l= \lfloor (n/2)^r \rfloor$ nearest neighbours. Different powers, $r$, lead to different behaviors in the thermodynamic quantities. Also, the power $r$ provides a proportionality relation between the number of vertices and the number of connections of each vertex. We clearly observe a qualitatively different behaviour depending on the number of connections in the network. Increasing the number of connections causes the patterns to transit from increasing to decreasing.
        Comparing with Figure \ref{fig:cap_vs_n}, we observe that the behaviour displayed can be predicted by looking at the quantity $\overline{\textup{cap}}(\mathcal{G})/n$, solely determined by the network structure.
        }
        \label{fig:behaviour_vs_n}
\end{figure}
\begin{figure}[!ht]
\centering
\includegraphics[width=0.8\textwidth]{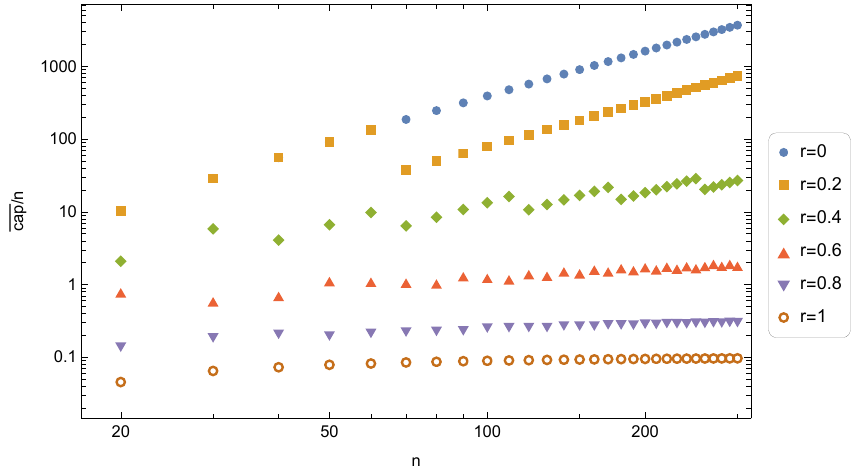}
\caption{ Behaviour of $\overline{\textup{cap}}(\mathcal{G})/n$ for circulant graphs with $l= \lfloor (n/2)^r \rfloor$ nearest neighbours. We notice two main trends depending on the number of connections. For powers $r= 0.6, 0.8, 1$ the quantity $\overline{\textup{cap}}(\mathcal{G})/n$ is converging towards a finite value, while for $r=0.4, 0.2, 0$ it is constantly increasing. Such pattern is in correspondence with what is observed for the thermodynamic quantities $\langle N \rangle$ and $c(\beta)$ in Figure \ref{fig:behaviour_vs_n}. }
\label{fig:cap_vs_n}
\end{figure}
\begin{figure}[!ht]
\centering
\includegraphics[width=0.7\textwidth]{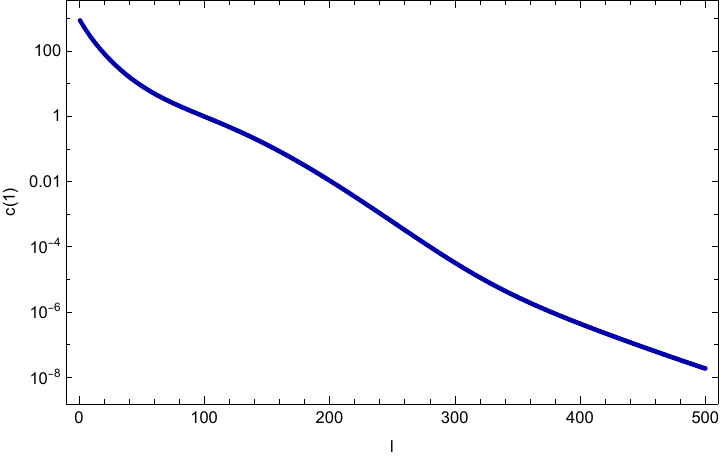}
\caption{ The graphic shows in the vertical axis the specific heat at fixed temperature, $\beta=1$, and in the horizontal axis the number of nearest neighbours, $l$. The number of vertices is fixed $n=1000$, while $l$ ranges from the minimum to the maximum value possible, $l \in [1,500]$. The picture clearly shows that by increasing the number of connections in the network structure the specific heat decrease.    }
\label{fig:c_vs_l}
\end{figure}

In Figure \ref{fig:bounds}, we show a comparison between the bound \eqref{eq:specific_heat_bound} and the specific heat computed with the formula \eqref{eq:specific_heat} as functions of the temperature $T=1/\beta$. We employ circulant graphs $Ci(n,l)$ with $l$ nearest neighbours; the number of vertices, $n$, is fixed. Let us recall that Proposition \ref{prop:specific_heat} is valid for $T^2<n/\overline{\textup{cap}}(\mathcal{G})$. We notice that the limit value $T=\sqrt{n/\overline{\textup{cap}}(\mathcal{G})}$ is always lower than the actual breakdown of the bound. Moreover, both the theoretical and the effective domain of validity of the bound \eqref{eq:specific_heat_bound} increase by increasing $l$.

\begin{figure}[!ht]
     \centering
     \begin{subfigure}[b]{0.45\textwidth}
         \centering
         \includegraphics[width=\textwidth]{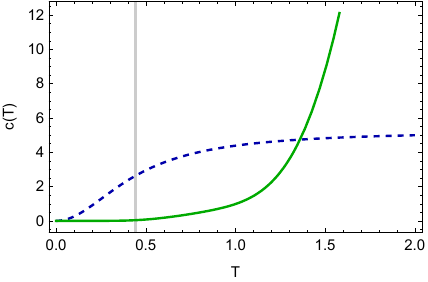}
         \caption{$l=100$ }
         \label{fig:bound_100}
     \end{subfigure}
     \begin{subfigure}[b]{0.45\textwidth}
         \centering
         \includegraphics[width=\textwidth]{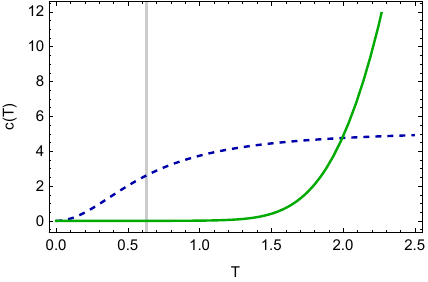}
         \caption{$l=200$}
         \label{fig:bound_200}
     \end{subfigure}
     \centering
     \begin{subfigure}[b]{0.45\textwidth}
         \centering
         \includegraphics[width=\textwidth]{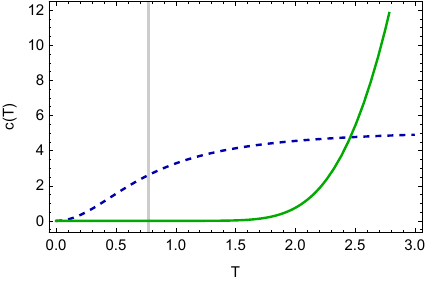}
         \caption{$l=300$ }
         \label{fig:bound_300}
     \end{subfigure}
     \begin{subfigure}[b]{0.45\textwidth}
         \centering
         \includegraphics[width=\textwidth]{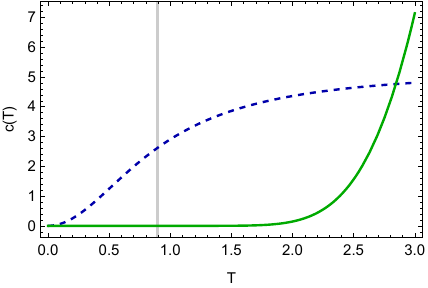}
         \caption{$l=400$}
         \label{fig:bound_400}
     \end{subfigure}
        \caption{ Comparison between the bound \eqref{eq:specific_heat_bound} and the full expression \eqref{eq:specific_heat} for the specific heat as functions of the temperature $T=1/\beta$. We consider circulant graphs $Ci(n,l)$ with fixed number of vertices, $n=1000$, and $l$ nearest neighbours. The four figures show a comparison for different numbers of neighbours $l$. The green-solid line is the specific heat given by the full expression \eqref{eq:specific_heat}, while the blue-dashed line is the bound \eqref{eq:specific_heat_bound}. The vertical line is the upper bound of the domain of validity of \eqref{eq:specific_heat_bound} as given in Proposition \ref{prop:specific_heat}, i.e., $T=\sqrt{n/\overline{\textup{cap}}(\mathcal{G})}$.
        }
        \label{fig:bounds}
\end{figure}

\section{Discussion and Outlook}\label{sec:discussion}

We have shown how the network structure affects the specific heat and the total number of phonons for a quantum system of interconnected elements oscillating around their equilibrium position. Specifically, under appropriate conditions, we proved the existence of a bound for such quantities. Depending on the graph properties, the bounds can be dramatically different. Considering a sequence of graphs labelled by its number of nodes and its number of connections, the specific heat and the total number of phonons can either increase indefinitely or always be finite. This information led us to state the conditions for a phase transition in terms of the network structure, Proposition \ref{prop:criteria}. Indeed, in the numerical validation, we also observed that the bounds, depending on the underlying graph, provide a qualitative measure for the actual behaviour of the thermodynamic quantities. Let us recall that the quantity $ \sqrt{2 \pi \beta} \langle N \rangle  /V $, see \eqref{eq:density_lambda}, can be used as a measure for the regime of the system. If $ \sqrt{2 \pi \beta} \langle N \rangle  /V $ is of order one, the system is in the quantum regime; otherwise, if the order of magnitude is much smaller than one, the system is in the regime of classical mechanics. It is possible to look at transitions between regimes in different ways. One option is by varying the temperature, so for large temperatures, $\beta \ll 1$, the system will behave classically, while at low enough temperatures, $\beta \gg 1$, the system will be in the quantum regime. Given the results of the present paper, we can also see the regime's transitions in terms of network properties. If the average Wiener capacity of the underlying graph is large enough, the system will have a  quantum behaviour, while for a low average Wiener capacity, the system will behave classically. Notice that in the considerations we have made, the volume is always considered fixed and finite.

Several directions can be investigated in the future. The model we studied can be generalised by including the presence of electrons and, therefore of electron-phonon interactions. Such models provide a more realistic account for the thermodynamic properties of solids and also provide a primary mechanism for electron pairing, essential for superconductivity \cite{mahan2013many, pathria2016statistical}. Another option is to consider anharmonic interactions \cite{srivastava2022physics}, in order to understand the interplay between the network structure and anharmonicity. Throughout the paper, we assumed that the network structure was fixed in time. However, it is possible that the interaction topology changes over time, for example, in a timescale much slower than the dynamics described by the Hamiltonian. Interesting phenomena could arise from the multiple timescale framework \cite{kuehn2015multiple, wechselberger2013canard}, e.g. dynamic bifurcations and canard solutions.

%%%%%%%%%%%%%%%%%%%%%%%%%%%%%%%%%%%%%%%%%%
\vspace{6pt} 
%%%%%%%%%%%%%%%%%%%%%%%%%%%%%%%%%%%%%%%%%%

%\acknowledgments{In this section you can acknowledge any support given which is not covered by the author contribution or funding sections. This may include administrative and technical support, or donations in kind (e.g., materials used for experiments).}

%%%%%%%%%%%%%%%%%%%%%%%%%%%%%%%%%%%%%%%%%%
%% Optional
%\appendixtitles{no} % Leave argument "no" if all appendix headings stay EMPTY (then no dot is printed after "Appendix A"). If the appendix sections contain a heading then change the argument to "yes".
%\appendixstart

\appendix

\section{Algebraic Graph Theory}\label{app:graph}

In this appendix we review some basic notions of algebraic graph theory essential for the paper.

\begin{Definition}
    The \emph{adjacency matrix} $A$ of a graph $\mathcal{G}$ is the matrix with components
    \begin{equation}
        A_{ij}:=
        \begin{cases}
            1 \ &\textup{if} \ \{i,j\} \in \mathcal{E} , \\
            0 \ &\textup{otherwise}.
        \end{cases}
    \end{equation}
\end{Definition}

\begin{Definition}
    The \emph{degree} of the node $i \in \mathcal{V}$ is defined as $d_i:= \sum_{j=1}^n A_{ij}$. The matrix $\Delta:=\textup{diag}(d_1, \dots d_n)$ is the \emph{degree matrix} the graph $\mathcal{G}$.
\end{Definition}

\begin{Definition}\label{def:laplacian}
    The matrix $L:= - \Delta + A$ is the \emph{Laplacian matrix} of the graph $\mathcal{G}$.
\end{Definition}

%Notice that, by definition, the matrices $A$ and $L$ are real and symmetric. The following properties holds.

\begin{Proposition}[\cite{godsil2001algebraic}]\label{prop:laplacian_properties}
    The Laplacian matrix $L$ of a connected graph $\mathcal{G}$ is symmetric and positive semidefinite, with exactly one zero eigenvalue associated to the all-one eigenvector $\mathbf{1}=(1, \dots, 1)^\intercal$.
\end{Proposition}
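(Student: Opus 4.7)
The plan is to verify each of the three claims (symmetry, positive semidefiniteness, and the one-dimensional zero-eigenspace) essentially by inspection of the quadratic form associated with $L$, using only Definitions \ref{def:laplacian} and the notions introduced just above. Throughout I will treat $L$ as acting on column vectors $x\in\mathbb{R}^n$.

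First I would dispose of symmetry: the adjacency matrix of a simple undirected graph satisfies $A_{ij}=A_{ji}$ because $\{i,j\}=\{j,i\}$, and $\Delta$ is diagonal, hence $\Delta^\intercal=\Delta$. Thus $L^\intercal=L$. Next I would establish the fundamental identity
\begin{equation}
x^\intercal L x \;=\; \sum_{\{i,j\}\in\mathcal{E}} (x_i-x_j)^2 ,
\end{equation}
which is exactly the quadratic form already appearing in the potential part of \eqref{eq:ham}. This is a short direct calculation: expanding $\sum_{\{i,j\}\in\mathcal{E}}(x_i-x_j)^2$ produces a diagonal contribution $\sum_i d_i x_i^2 = x^\intercal \Delta x$ (since each edge incident to $i$ contributes one $x_i^2$) and an off-diagonal contribution $-2\sum_{\{i,j\}\in\mathcal{E}} x_i x_j = -x^\intercal A x$, matching the defining expression of $L$ up to the sign convention used in the paper.

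From this identity positive semidefiniteness follows immediately, since the right-hand side is a sum of squares. To exhibit the zero eigenvalue, I would compute $(L\mathbf{1})_i = \sum_{j} L_{ij} = -d_i + \sum_j A_{ij} = -d_i + d_i = 0$ componentwise, so $L\mathbf{1}=0$ and $\lambda_0=0$ is an eigenvalue with eigenvector $\mathbf{1}$.

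The only step requiring a nontrivial argument — and the main obstacle — is showing that the zero eigenvalue is \emph{simple}, which is precisely where connectedness of $\mathcal{G}$ enters. Suppose $Lx=0$ for some $x\in\mathbb{R}^n$. Then $x^\intercal L x=0$, and by the identity above
\begin{equation}
\sum_{\{i,j\}\in\mathcal{E}} (x_i-x_j)^2 = 0,
\end{equation}
which forces $x_i=x_j$ for every edge $\{i,j\}\in\mathcal{E}$. Since $\mathcal{G}$ is connected, any two vertices $i,j\in\mathcal{V}$ are joined by a finite path $i=k_0,k_1,\dots,k_m=j$ of edges, and iterating along this path gives $x_i=x_{k_1}=\dots=x_j$. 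Therefore $x$ is constant, i.e.\ $x\in\Span(\mathbf{1})$, so $\ker L$ is one-dimensional. Combined with the fact that $L$ is symmetric (hence diagonalizable with real eigenvalues and orthogonal eigenspaces), this means exactly one eigenvalue equals zero, concluding the proof.
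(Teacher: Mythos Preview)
Your proof is correct and is the standard argument for this classical fact. The paper does not actually prove this proposition: it is stated with a citation to \cite{godsil2001algebraic} and no proof is given in the text, so there is nothing to compare against beyond noting that your argument is the textbook one the citation points to.

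One minor remark: you correctly flag the sign issue. With the paper's literal Definition~\ref{def:laplacian}, $L=-\Delta+A$, the quadratic form you compute would come out with the opposite sign and $L$ would be negative semidefinite; the paper clearly intends the standard convention $L=\Delta-A$ (as is required for $q*(Lq)=\sum_{\{i,j\}\in\mathcal{E}}(q_i-q_j)^2$ and for the frequencies $\omega_i=\sqrt{\lambda_i}$ to be real), and your phrase ``up to the sign convention used in the paper'' handles this adequately.
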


\begin{Lemma}\label{lm:orth}
    There exists an orthogonal transformation $S$, $S^\intercal=S^{-1}$, such that
    \begin{equation}
        S^\intercal L S = \Lambda:=\textup{diag}(0, \lambda_1, \dots, \lambda_{n-1}) .
    \end{equation}
\end{Lemma}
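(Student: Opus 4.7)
The plan is to deduce this as a direct application of the real spectral theorem to the Laplacian, using the structural facts already recorded in Proposition \ref{prop:laplacian_properties}.

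First I would invoke Proposition \ref{prop:laplacian_properties}, which guarantees that $L$ is symmetric (and positive semidefinite, with a single zero eigenvalue associated to the all-ones vector $\mathbf{1}$). Since $L$ is a real symmetric $n \times n$ matrix, the spectral theorem produces an orthonormal basis of $\mathbb{R}^n$ consisting of eigenvectors of $L$. Collecting these eigenvectors as the columns of a matrix $S$ yields an orthogonal matrix, i.e.\ $S^\intercal S = S S^\intercal = I$, equivalently $S^\intercal = S^{-1}$, and one has $S^\intercal L S = D$ where $D$ is the diagonal matrix whose entries are the eigenvalues of $L$.

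Next, I would choose the ordering of the columns of $S$. Because Proposition \ref{prop:laplacian_properties} ensures that $0$ is an eigenvalue (with eigenvector proportional to $\mathbf{1}/\sqrt{n}$), I would place the corresponding normalized eigenvector as the first column of $S$. The remaining columns are chosen as an orthonormal basis of the orthogonal complement, consisting of eigenvectors associated with the remaining eigenvalues $\lambda_1, \dots, \lambda_{n-1}$ (in any fixed order, e.g.\ non-decreasing). With this choice the resulting diagonal matrix is exactly
\begin{equation}
\Lambda = \mathrm{diag}(0, \lambda_1, \dots, \lambda_{n-1}),
\end{equation}
as required.

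There is no real obstacle here: the statement is essentially a bookkeeping version of the spectral theorem for real symmetric matrices, and the only care needed is to place the zero eigenvalue in the first slot, which is possible precisely because Proposition \ref{prop:laplacian_properties} identifies $\mathbf{1}$ as an eigenvector for $0$. If one wanted, one could additionally record that within each eigenspace of dimension greater than one, Gram--Schmidt produces the required orthonormal basis, so the construction of $S$ is fully explicit.
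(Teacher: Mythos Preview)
Your argument is correct and matches the paper's own proof, which simply notes that $L$ is real symmetric and invokes the spectral theorem, pointing to the Gram--Schmidt construction in Appendix~\ref{app:ort} for an explicit $S$. Your additional remark about placing the $\mathbf{1}/\sqrt{n}$ eigenvector first is exactly the ordering the paper uses there.
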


\begin{proof}
    The statement follows form the fact that $L$ is a real symmetric matrix. For a construction see Appendix \ref{app:ort}, and the explicit form \eqref{eq:s}.
\end{proof}

As a consequence of Proposition \eqref{prop:laplacian_properties}, the Laplacian matrix does not admit an inverse. However, a generalisation of inverse matrix can be defined.

\begin{Definition}\label{def:mp_inv}
    Given a real matrix $M$, the \emph{Moore-Penrose inverse}, or \emph{pseudo-inverse}, $M^+$ is a matrix satisfying the conditions
    \begin{align}
        M M^+ M &= M ,\\
        M^+ M M^+ &= M^+ ,\\ 
        (M M^+)^\intercal &= M M^+, \\
        (M^+ M)^\intercal &= M^+ M .
    \end{align}
\end{Definition}

\begin{Corollary}\label{cor:mp_l}
    The \emph{Moore-Penrose inverse}, or \emph{pseudo-inverse}, $L^+$ of the Laplacian matrix is defined as
    \begin{equation}
        L^+ = S \Lambda^+ S^\intercal ,
    \end{equation}
    where $\Lambda^+ = \textup{diag}(0, 1/\lambda_1, \dots, 1/\lambda_{n-1})$.
\end{Corollary}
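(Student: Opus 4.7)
The plan is to verify, directly from Definition \ref{def:mp_inv}, that the matrix $M := S \Lambda^+ S^\intercal$ satisfies the four defining relations of the Moore-Penrose inverse with respect to the Laplacian $L$. By Lemma \ref{lm:orth} one has $L = S \Lambda S^\intercal$, and since $S$ is orthogonal, $S^\intercal S = S S^\intercal = I$. Consequently, any product involving $L$ and $M$ collapses into a diagonal expression conjugated by $S$, so the whole argument reduces to verifying four identities about the diagonal matrices $\Lambda$ and $\Lambda^+$.

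First I would record the two key diagonal products: $\Lambda \Lambda^+ = \Lambda^+ \Lambda = \textup{diag}(0, 1, \dots, 1)$, a symmetric idempotent matrix. Then I would verify the two absorption identities. Since $\Lambda \Lambda^+ \Lambda = \Lambda$ (the zero entry is preserved and the ones act as identity on the remaining components), we obtain
\begin{equation}
L M L = S\, \Lambda \Lambda^+ \Lambda \, S^\intercal = S \Lambda S^\intercal = L,
\end{equation}
and analogously $\Lambda^+ \Lambda \Lambda^+ = \Lambda^+$ yields $M L M = M$.

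For the remaining two conditions, I would write $L M = S (\Lambda \Lambda^+) S^\intercal$ and $M L = S (\Lambda^+ \Lambda) S^\intercal$. Both products are symmetric because conjugation by an orthogonal matrix preserves symmetry and the inner diagonal factor $\textup{diag}(0,1,\dots,1)$ is trivially symmetric; hence $(LM)^\intercal = LM$ and $(ML)^\intercal = ML$. Together with the two absorption identities, this exhausts Definition \ref{def:mp_inv}, and by the uniqueness of the Moore-Penrose inverse we conclude $L^+ = S \Lambda^+ S^\intercal$.

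The only subtlety is consistency of conventions: the zero eigenvalue of $L$ (guaranteed by Proposition \ref{prop:laplacian_properties}) must correspond to the zero entry placed in $\Lambda^+$ at the same index, rather than attempting to invert it. Once this bookkeeping is fixed, no further obstacle arises; the proof is essentially a one-line verification on diagonal matrices lifted back to $L$ via the orthogonal conjugation $S$.
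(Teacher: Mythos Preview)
Your proposal is correct and follows exactly the approach indicated in the paper, which simply says that one can easily check the conditions of Definition \ref{def:mp_inv}. You have spelled out that verification in full detail, but the underlying strategy is identical.
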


\begin{proof}
    One can easily check that the conditions of Definition \ref{def:mp_inv} are satisfied (see also \cite{gutman2004generalized}).
\end{proof}
\section{Orthogonalisation and Conserved Quantities}\label{app:ort}

We provide an explicit expression for the orthogonal matrix $S$, which by a similarity transformation brings the Laplacian matrix, $L$, of a graph in diagonal form. Also, we show how to compute explicitly conserved quantities for the quantum system described by the Hamiltonian \eqref{eq:ham} in different coordinate bases.

\begin{Definition}\label{def:gs}
    Let $u$ be a nonzero vector, the \emph{projection} of a vector $v$ along $u$ is defined as
\begin{equation}
    \textup{proj}_u (v) = \left(\frac{v * u}{u * u}\right) u .
\end{equation}
The \emph{Gram–Schmidt orthogonalisation} of the set of vectors $v_0, \dots, v_{n-1}$ is the orthogonal set of vectors $u_0, \dots, u_{n-1}$ obtained as follows
\begin{align}
    u_0 &= v_0 ,\\
    u_i &= v_i - \sum_{j=0}^{i-1}  \textup{proj}_{u_j} (v_i) , 
\end{align}
where $i=1, \dots, n-1$.
The associated \emph{orthonormal basis} is obtained by rescaling to unit length each vector $u_0, \dots, u_{n-1}$,
\begin{equation}
    \Tilde{u}_i := \frac{u_i}{\sqrt{u_i * u_i}} , 
\end{equation}
where $i=0, \dots, n-1$.
\end{Definition}

\begin{Corollary}
 Let $v_0, \dots, v_{n-1}$ be a set of eigenvectors of the Laplacian matrix $L$, associated to the eigenvalues $\lambda_0, \dots, \lambda_{n-1}$. Then the orthogonal transformation $S$ diagonalising $L$ is the matrix having column vectors  $\Tilde{u}_0, \dots, \Tilde{u}_{n-1}$ obtained by a Gram–Schmidt orthogonalisation and normalisation to unit length, which in matrix form reads
\begin{equation}\label{eq:s}
    S=
    \begin{pmatrix}
        (\Tilde{u}_0)_1 & \cdots & (\Tilde{u}_{n-1})_1 \\
        \vdots & \ddots & \vdots \\
        (\Tilde{u}_0)_{n} & \cdots & (\Tilde{u}_{n-1})_n
    \end{pmatrix} .
\end{equation}   
\end{Corollary}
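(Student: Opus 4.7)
The plan is to verify two properties of the columns of $S$: they form an orthonormal set, and each is an eigenvector of $L$ with the right eigenvalue. The first property is immediate from the construction in Definition \ref{def:gs}, since Gram–Schmidt orthogonalisation followed by normalisation produces an orthonormal set by design. So the real content lies in showing that the Gram–Schmidt procedure does not destroy the eigenvector property of the original $v_0, \dots, v_{n-1}$.

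To do so, I would first invoke Proposition \ref{prop:laplacian_properties}, which tells us $L$ is real symmetric. A standard fact is that eigenvectors of a real symmetric matrix corresponding to distinct eigenvalues are automatically orthogonal: if $Lv=\lambda v$ and $Lw=\mu w$ with $\lambda \neq \mu$, then $\lambda (v*w) = (Lv)*w = v*(Lw) = \mu (v*w)$, forcing $v*w=0$. Consequently, when running Gram–Schmidt on $v_0,\dots,v_{n-1}$, any $v_i$ whose eigenvalue differs from that of $v_j$ already has $\textup{proj}_{u_j}(v_i)=0$; only the projections coming from other eigenvectors sharing the same eigenvalue contribute nontrivially.

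The key observation is then that subtracting projections onto vectors inside the same eigenspace produces another vector inside that eigenspace: if $u_j$ and $v_i$ both lie in $\ker(L-\lambda \id)$, then so does $v_i - \textup{proj}_{u_j}(v_i)$, since eigenspaces are linear subspaces. By induction on $i$, each $u_i$ produced by the Gram–Schmidt step is still an eigenvector of $L$ with eigenvalue $\lambda_i$, and rescaling to unit norm preserves this. Hence $L \tilde{u}_i = \lambda_i \tilde{u}_i$ for every $i$.

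Assembling this, the matrix $S$ defined in \eqref{eq:s} satisfies $S^\intercal S = I$ because its columns are orthonormal, and $L S = S \Lambda$ because applying $L$ column by column yields $L\tilde{u}_i = \lambda_i \tilde{u}_i$. Multiplying by $S^\intercal = S^{-1}$ on the left gives $S^\intercal L S = \Lambda$, which is precisely the statement. The only subtle point in the argument is the eigenspace-preservation step for repeated eigenvalues; everything else is bookkeeping from Definition \ref{def:gs}.
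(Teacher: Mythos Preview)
Your argument is correct and follows the same approach as the paper, which simply states that the result ``follows from Definition~\ref{def:gs}.'' In fact you have supplied considerably more detail than the paper does: the paper's one-line proof leaves implicit precisely the eigenspace-preservation issue for repeated eigenvalues that you have taken care to spell out.
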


\begin{proof}
    The statement follows from Definition \ref{def:gs}.
\end{proof}

Let $\lambda_0=0$ be the first eigenvalue of $L$ and $v_0=(1, \dots, 1)^\intercal$ its eigenvector. From Definition \ref{def:gs} we have that
\begin{equation}
    \Tilde{u}_0 = \frac{1}{\sqrt{n}} (1, \dots, 1)^\intercal .
\end{equation}
As a consequence we have that
\begin{align}
    P_1 &= \sum_{j=1}^n S^\intercal_{1j} p_j \\
        &= \sum_{j=1}^n S_{j1} p_j \\
        &= \Tilde{u}_0 * p \\
        &= \frac{1}{\sqrt{n}} (p_1 + \dots + p_n) .
\end{align}
So $P_1$, up to a rescaling, is the total momentum of the system. Furthermore, $P_1$ is an example of conserved quantity. Such property follows from a more general result that we provide below.

\begin{Corollary}\label{cor:cons}
    %Let $H$ be the Hamiltonian in the $Q$ and $P$ coordinates \eqref{eq:ham_dec}.
     %\begin{equation}
     %   H = \frac{1}{2}  \sum_{i=1}^n H_i ,
    %\end{equation}
    The Hermitian operators
    \begin{equation}\label{eq:hi}
        H_i = P_i^2 + \lambda_{i-1} Q_i^2 ,
    \end{equation}
    where $i=1, \dots, n $, form a set of $n$ commuting \emph{observables},
    \begin{equation}\label{eq:commuting_energies}
        [H_i,H_j] = 0 , 
    \end{equation}
    for all $ i,j=1, \dots n$.
    Moreover, in the coordinates $q$ and $p$ the corresponding observables are given by
    \begin{equation}\label{eq:hi_qp}
        H_i = (\Tilde{u}_{i-1} * p)^2 + \lambda_{i-1} (\Tilde{u}_{i-1} * q)^2 , \quad i=1, \dots, n,
    \end{equation}
    where $\Tilde{u}_0, \dots, \Tilde{u}_{n-1}$ is an orthonormal basis of the Laplacian matrix.
\end{Corollary}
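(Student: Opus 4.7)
The plan is to verify the two claims of the corollary separately, using the canonical commutation relations already established for the new operators together with the explicit form of $S$ provided in (\ref{eq:s}).

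For the commutation identities $[H_i, H_j] = 0$, I would observe that each $H_i$ is a polynomial in the single pair $(Q_i, P_i)$, and that the canonical commutation relations $[Q_i, P_j] = \mathrm{i}\delta_{ij}$, $[Q_i, Q_j] = 0$, $[P_i, P_j] = 0$ proven earlier imply that every generator appearing in $H_i$ commutes with every generator appearing in $H_j$ whenever $i \neq j$. In particular $[P_i^2, P_j^2] = [P_i^2, Q_j^2] = [Q_i^2, P_j^2] = [Q_i^2, Q_j^2] = 0$, and expanding the commutator yields $[H_i, H_j] = 0$. For $i = j$ the statement is trivial. Hermiticity of each $H_i$ is immediate from the Hermiticity of $Q_i$, $P_i$ and the reality (in fact non-negativity) of $\lambda_{i-1}$.

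For the representation (\ref{eq:hi_qp}) in the original $(q,p)$ coordinates, I would use the fact that the columns of $S$ are precisely the orthonormal eigenvectors $\tilde{u}_0, \dots, \tilde{u}_{n-1}$ produced by the Gram--Schmidt procedure, so that $S_{ki} = (\tilde{u}_{i-1})_k$. Then
\begin{equation}
Q_i = (S^\intercal q)_i = \sum_{k=1}^n S_{ki}\, q_k = \tilde{u}_{i-1} * q,
\end{equation}
and identically $P_i = \tilde{u}_{i-1} * p$. Plugging these expressions into $H_i = P_i^2 + \lambda_{i-1} Q_i^2$ gives the claimed formula immediately.

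I do not anticipate any real obstacle; the corollary is essentially a direct consequence of the already-decoupled form (\ref{eq:ham_dec}) together with the explicit description of $S$. The only subtlety is bookkeeping, namely keeping the index shift straight between the operator labels $1, \dots, n$ and the eigenvalue/eigenvector labels $0, \dots, n-1$. As a sanity check, the case $i = 1$ gives $\lambda_0 = 0$ and $\tilde{u}_0 = n^{-1/2}(1, \dots, 1)^\intercal$, so $H_1 = P_1^2 = (P^{\textup{tot}})^2$, recovering the total-momentum conservation already discussed in this appendix.
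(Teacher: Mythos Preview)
Your proof is correct and follows essentially the same route as the paper: the commutation relations are deduced from the fact that each $H_i$ depends only on the pair $(Q_i,P_i)$ together with the canonical commutation relations for $Q,P$, and the $(q,p)$ expression is obtained from $Q=S^\intercal q$, $P=S^\intercal p$ and the explicit form \eqref{eq:s} of $S$. Your write-up is slightly more detailed (spelling out the individual commutators and the column identification $S_{ki}=(\tilde u_{i-1})_k$) and includes a helpful sanity check at $i=1$, but the argument is the same.
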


\begin{proof}
    Let us notice that each quantity $H_i$ depends only on $Q_i$ and $P_i$, $i=1, \dots, n$. Then, the commutation relations \eqref{eq:commuting_energies} follow from the canonical commutation relations of $Q$ and $P$. The expression \eqref{eq:hi_qp} in terms of $q$ and $p$ is obtained by using the inverse transformations $Q=S^\intercal q$, $P=S^\intercal p$, together with expression \eqref{eq:s} for the matrix $S$.
\end{proof}

%\begin{Remark}
%    The quantities $H_i$ described in Corollary \ref{cor:cons} are conserved quantities of the system.
%\end{Remark}

%\rb{I think that the expression for S is unique up to the ordering of the vectors. So if one fixes an ordering of the eigenvalues and eigenvectors of L then S is unique I think. Of course one has to fix also the orthonormalisation process to have uniqueness, otherwise it is not. It would depend on the ordering of the vectors and the orthogonalisation process.}

\section{Circulant Graphs}\label{app:circ}

The study of circulant graphs is motivated by the possibility of performing explicit computations thanks to the properties of their associated algebraic quantities \cite{monakhova2012survey}. The structure of a circulant graph is given by a ring of nodes connected to each other by a neighbouring rule. Well-known examples are the cycle graph, or ring network, and the complete graph, or all-to-all network. In between these two examples there is a whole zoo of possibilities provided by different combinations of connections among the set of nodes. From the physical perspective, circulant graphs have the properties that the associated eigenvalues and eigenvectors can be written in terms of Fourier modes.

\begin{Definition}\label{def:circ}
    Let $\mathcal{V}= \{1, \dots, n\}$ be the set of vertices, and let $\mathcal{N}$ be a list such that $\mathcal{N} \subseteq \{ 1, 2 , \dots, \lfloor n/2 \rfloor \}$, where $\lfloor \cdot  \rfloor$ denotes the floor function.
    The \emph{circulant graph} $Ci(n, \mathcal{N})$ is the graph with $n$ vertices such that each vertex $i$ is adjacent to the $\textup{mod}_n (i+j)$-th and $\textup{mod}_n(i-j)$-th vertices, for all $j \in \mathcal{N}$.
\end{Definition}

\begin{Remark}
    If $\mathcal{N}=\{ 1, 2 , \dots, \lfloor n/2 \rfloor \}$ then $Ci(n, \{ 1, 2 , \dots, \lfloor n/2 \rfloor \})$ is the \emph{complete graph} with $n$ vertices. Another example is $Ci(n, \{ 1\})$ which is the \emph{cycle} graph with $n$ vertices.
\end{Remark}

\begin{Definition}
    A \emph{real circulant matrix}, $C$, is a matrix of the form
    \begin{equation}\label{eq:circulant}
    C=
        \begin{pmatrix}
            c_0 & c_{n-1}  & \cdots  & c_1 \\
            c_1 & c_0  & \cdots &  c_2 \\ 
             \vdots & \vdots  & \ddots  &  \vdots \\ 
             c_{n-1} & c_{n-2}  & \cdots  & c_0
        \end{pmatrix} ,
    \end{equation}
    where $c_0, \dots, c_{n-1} \in \mathbb{R}$.
\end{Definition}

\begin{Corollary}[\cite{gray2006toeplitz}]\label{cor:eig_circulant}
    The eigenvalues $\mu_0, \dots, \mu_{n-1}$ and eigenvectors $w_0, \dots, w_{n-1}$ of a circulant matrix \eqref{eq:circulant} are
    \begin{equation}\label{eq:eig_circ}
        \begin{aligned}
      \mu_j &= c_0 + c_1 f^j + c_2 f^{2j} + \dots + c_{n-1} f^{(n-1) j} , \\
        w_j &= \frac{1}{\sqrt{n}} \left( 1 , f^j, f^{2j}, \dots, f^{(n-1) j} \right)^\intercal ,
    \end{aligned}
    \end{equation}
    where $f=\exp(2 \pi \mathrm{i}/n)$ is the $n$-th root of unity, and $j=0, \dots, n-1$.
\end{Corollary}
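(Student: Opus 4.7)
The plan is to realize the circulant matrix as a polynomial in a single \emph{cyclic shift} matrix and then diagonalize the shift matrix explicitly. Concretely, let $P$ be the $n \times n$ cyclic permutation matrix defined by $P_{ij} = 1$ if $i \equiv j+1 \pmod{n}$ and $0$ otherwise. Inspecting the structure of \eqref{eq:circulant}, one sees that
\begin{equation}
    C = c_0 I + c_1 P + c_2 P^2 + \dots + c_{n-1} P^{n-1},
\end{equation}
so setting $p(z) := c_0 + c_1 z + \dots + c_{n-1} z^{n-1}$ gives $C = p(P)$. This is the step I would verify first, via a one-line check on how $P^k$ acts on the standard basis (shifting indices cyclically by $k$).

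Next, I would diagonalize $P$ directly. Since $P^n = I$, the minimal polynomial of $P$ divides $z^n - 1$, so the eigenvalues of $P$ lie among the $n$-th roots of unity $f^j = \exp(2\pi \mathrm{i} j / n)$, $j = 0, \dots, n-1$. The proposed candidate eigenvectors $w_j = \frac{1}{\sqrt{n}}(1, f^j, f^{2j}, \dots, f^{(n-1)j})^\intercal$ can be tested directly: the $i$-th component of $P w_j$ equals $(w_j)_{i-1 \bmod n} = \frac{1}{\sqrt{n}} f^{(i-2)j}$ (with cyclic convention), which equals $f^j \cdot (w_j)_i$. Hence $P w_j = f^j w_j$, producing $n$ distinct eigenvalues (since the $f^j$ are distinct) with the claimed eigenvectors.

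Finally, the conclusion follows from the spectral mapping for polynomials of a diagonalizable matrix: since $C = p(P)$ and $P w_j = f^j w_j$, one has
\begin{equation}
    C w_j = p(P) w_j = p(f^j) w_j = \left(c_0 + c_1 f^j + c_2 f^{2j} + \dots + c_{n-1} f^{(n-1)j}\right) w_j = \mu_j w_j,
\end{equation}
which is exactly \eqref{eq:eig_circ}.

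I do not anticipate any serious obstacle: the argument is almost entirely bookkeeping about cyclic index shifts and the geometric identity $f^n = 1$. The only mild care needed is a consistent indexing convention for $P$ (whether $P$ shifts up or down), which must match the convention used in writing $C$ as $\sum_k c_k P^k$; otherwise one could end up with $p(f^{-j})$ in place of $p(f^j)$, which would merely relabel the eigenvalues without changing the set. An alternative, entirely equivalent route would be to verify $C w_j = \mu_j w_j$ by a direct computation of the convolution sum $(Cw_j)_i = \frac{1}{\sqrt{n}} \sum_k c_{(i-k) \bmod n} f^{(k-1)j}$ and factoring out $f^{(i-1)j}$; I would mention this as an alternative but prefer the polynomial-in-$P$ formulation since it makes the role of the discrete Fourier basis transparent.
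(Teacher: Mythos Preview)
Your argument is correct and is in fact one of the standard derivations of this classical result. Note, however, that the paper does \emph{not} supply its own proof of this corollary: it is simply quoted from the reference \cite{gray2006toeplitz}, as indicated by the citation in the corollary heading. So there is no ``paper's proof'' to compare against here; you have filled in what the paper leaves to the literature.

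Your write-up via the polynomial representation $C=p(P)$ in the cyclic shift $P$ is clean and makes the Fourier structure transparent. The caveat you flag about indexing is real: with the convention $C_{ij}=c_{(i-j)\bmod n}$ used in \eqref{eq:circulant} and the shift $P$ you define, one gets $P w_j = f^{-j} w_j$ rather than $f^{j} w_j$, so the eigenvalue comes out as $p(f^{-j})=\sum_m c_m f^{-mj}$. As you correctly note, this is only a relabelling $j\mapsto n-j$ of the eigenpairs; moreover, for the symmetric circulants (Laplacians) used in the paper one has $c_m=c_{n-m}$, so the two expressions coincide. If you want the formula to match \eqref{eq:eig_circ} literally, simply take the shift with the opposite orientation, $P_{ij}=1$ iff $j\equiv i+1\pmod n$, so that $C=\sum_k c_k P^k$ still holds and $P w_j=f^{j}w_j$.
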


As a consequence of Definition \ref{def:circ} we have that the adjacency and Laplacian matrix of a circulant graph is a symmetric circulant matrix. We immediately see the great advantage of studying such a class of graphs: from Corollary \ref{cor:eig_circulant} we have an explicit formula for eigenvectors and eigenvalues. We further specialise our study by making the following assumption.

\begin{Assumption}
 We consider the class of circulant graphs with $l$, $2l<n$, nearest neighbours, $Ci(n,l):=Ci(n, \{1, \dots, l\})$.
\end{Assumption}

The requirement $2l<n$ allows us to avoid some complications arising from the fact that if a circulant graph has exactly $\lfloor n/2 \rfloor$ nearest neighbours there is the possibility that the $(i+\lfloor n/2 \rfloor)$-th and $(i-\lfloor n/2 \rfloor)$-th neighbours of the $i$-th vertex coincide. We recall that the case $2l=n$ is the complete graph with $n$ vertices, whose properties are well-known.

\begin{Proposition}\label{prop:eig_circ}
    The eigenvalues of the Laplacian matrix of a circulant graph $Ci(n,l)$ are
    \begin{align}
        \lambda_0 &= 0 , \\
        \lambda_j &= 2l+1 - \frac{\sin{\left( \frac{j \pi }{n} (2 l+ 1) \right)}}{\sin{\left( \frac{j \pi }{n} \right)}} , 
    \end{align}
    where $j=1, \dots, n-1$.
\end{Proposition}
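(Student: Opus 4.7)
The plan is to exploit Corollary \ref{cor:eig_circulant} together with the fact that $Ci(n,l)$ is a regular graph of degree $2l$. First I would observe that under the assumption $2l<n$, every vertex of $Ci(n,l)$ is adjacent to exactly $l$ distinct neighbours on each side, so $\Delta = 2l\,I$ and the Laplacian reduces to $L = 2l\,I - A$ (up to the sign convention of Definition \ref{def:laplacian}). The adjacency matrix $A$ is itself a circulant matrix: in the notation of \eqref{eq:circulant} its generator sequence is $c_0=0$, $c_k = c_{n-k} = 1$ for $k=1,\dots,l$, and $c_k = 0$ otherwise. Hence $L$ is circulant too, and Corollary \ref{cor:eig_circulant} applies directly.

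Next I would plug this generator sequence into \eqref{eq:eig_circ}. Using $f=\exp(2\pi\mathrm{i}/n)$ and pairing the indices $k$ and $n-k$, the eigenvalues of $A$ become
\begin{equation}
    \mu_j^A = \sum_{k=1}^{l}\bigl(f^{kj}+f^{-kj}\bigr) = 2\sum_{k=1}^{l}\cos\!\left(\tfrac{2\pi k j}{n}\right),
\end{equation}
so that the eigenvalues of $L$ are $\lambda_j = 2l - \mu_j^A$. For $j=0$ this immediately gives $\lambda_0=0$, consistent with Proposition \ref{prop:laplacian_properties}.

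The remaining step is to put the cosine sum in closed form. I would invoke the Dirichlet kernel identity
\begin{equation}
    1 + 2\sum_{k=1}^{l}\cos(k\theta) \;=\; \frac{\sin\!\bigl((2l+1)\theta/2\bigr)}{\sin(\theta/2)},
\end{equation}
applied with $\theta = 2\pi j/n$. Substituting back yields
\begin{equation}
    \mu_j^A = \frac{\sin\!\bigl((2l+1)\pi j/n\bigr)}{\sin(\pi j/n)} - 1,
\end{equation}
and therefore $\lambda_j = 2l+1 - \sin((2l+1)\pi j/n)/\sin(\pi j/n)$ for $j=1,\dots,n-1$, which is exactly the claim. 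The denominator $\sin(\pi j/n)$ is nonzero precisely because $1\le j\le n-1$, and the assumption $2l<n$ guarantees that no double-counting of neighbours occurs in writing $\Delta=2l\,I$.

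The main obstacle is essentially bookkeeping: verifying the Dirichlet-kernel closed form (a short induction or a geometric-series manipulation on $\sum_{k=-l}^{l} e^{\mathrm{i}k\theta}$) and making sure the sign convention of Definition \ref{def:laplacian} is handled so that the Laplacian is positive semidefinite, as required by Proposition \ref{prop:laplacian_properties}. Both are routine, so the entire argument reduces to a careful application of Corollary \ref{cor:eig_circulant} plus a trigonometric identity.
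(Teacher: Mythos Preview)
Your proposal is correct and follows essentially the same route as the paper: identify the Laplacian of $Ci(n,l)$ as a circulant matrix, apply Corollary \ref{cor:eig_circulant}, pair the exponentials into $2\sum_{k=1}^{l}\cos(2\pi kj/n)$, and close the sum into the Dirichlet-kernel form. The only cosmetic difference is that the paper works directly with $L$ as the circulant and reaches the closed form via a cosine-sum identity plus a product-to-sum step, whereas you factor through $A$ and invoke the Dirichlet kernel by name; your remark on the sign convention in Definition \ref{def:laplacian} is also well taken.
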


\begin{proof}
    
    We notice that the the first row, or collum, of the Laplacian matrix of $Ci(n,l)$ is
    \begin{equation}\label{eq:row_l}
        (2l, \underbrace{-1, \dots, -1}_{l}, \underbrace{0,\dots,0}_{n-2l-1}, \underbrace{-1, \dots, -1}_{l}) .
    \end{equation}
    The components of \eqref{eq:row_l} are the terms $c_0, \dots, c_{n-1}$ appearing in the formula for the eigenvalues \eqref{eq:eig_circ}. Let us explicitly compute the eigenvalues,
    \begin{align}\label{eq:eig_sums}
        \lambda_j = 2l - \sum_{s=1}^l \exp{\left( \frac{2 \pi \mathrm{i} j}{n} s \right)} - \sum_{s=n-l}^{n-1} \exp{\left( \frac{2 \pi \mathrm{i} j}{n} s \right)}. 
    \end{align}
    For $j=0$ it is easy to see that $\lambda_0=0$, so we are left to compute $\lambda_j$, $j=1, \dots, n-1$. Notice that the second sum in \eqref{eq:eig_sums} can be reordered as follows,
    \begin{align}
        \sum_{s=n-l}^{n-1} \exp{\left( \frac{2 \pi \mathrm{i} j}{n} s \right)}&= \sum_{s=1}^l \exp{\left( \frac{2 \pi \mathrm{i} j}{n} (n-s) \right)} \\
        &= \sum_{s=1}^l \exp{\left( 2 \pi \mathrm{i} j - \frac{2 \pi \mathrm{i} j}{n} s \right)} \\
         &= \sum_{s=1}^l \exp{\left( 2 \pi \mathrm{i} j \right)} \exp{\left(-\frac{2 \pi \mathrm{i} j}{n} s \right)} \\
          &= \sum_{s=1}^l \exp{\left(-\frac{2 \pi \mathrm{i} j}{n} s \right)} .
    \end{align}
    Then the expression for the eigenvalues becomes
    \begin{align}
        \lambda_j &= 2l - \sum_{s=1}^l \exp{\left( \frac{2 \pi \mathrm{i} j}{n} s \right)} - \sum_{s=1}^l \exp{\left(-\frac{2 \pi \mathrm{i} j}{n} s \right)} \\ 
        &= 2l - \sum_{s=1}^l \left( \exp{\left( \frac{2 \pi \mathrm{i} j}{n} s \right)} + \exp{\left(-\frac{2 \pi \mathrm{i} j}{n} s \right)} \right) \\
        &=2l - 2\sum_{s=1}^l  \cos{\left( \frac{2 \pi j}{n} s \right)} . \label{eq:sum_almost_end}
        %&= 2l+1 - \frac{\sin{\left( \frac{j \pi }{n} (2 l+ 1) \right)}}{\sin{\left( \frac{j \pi }{n} \right)}} ,
    \end{align}
    Now we use the following result for summations of cosines \cite{jolley2012summation}
    \begin{equation}\label{eq:sum_cos}
        \sum_{s=1}^l \cos{(\theta s)} = \frac{\cos{\left(\frac{(l+1)\theta}{2}\right)} \sin{\left( \frac{l \theta}{2} \right)}}{\sin{\left( \frac{\theta}{2} \right)}} .
    \end{equation}
    Substituting $\theta=\frac{2 \pi j}{n}$ the numerator becomes
    \begin{equation}\label{eq:trig_id}
        \cos{\left(\frac{\pi j}{n} (l+1)\right)} \sin{\left( \frac{\pi j}{n} l \right)} = \frac{1}{2} \left( \sin{\left(\frac{\pi j}{n} (2l+1)\right)} - \sin{\left(\frac{\pi j}{n}\right)} \right) ,
    \end{equation}
    where we used the trigonometric identity $\cos{\theta}\sin{\phi} =  1/2 (\sin{(\theta+\phi)} - \sin{(\theta-\phi)})$. Then, collecting together \eqref{eq:sum_cos} and \eqref{eq:trig_id}, the sum in \eqref{eq:sum_almost_end} becomes
    \begin{equation}
        2\sum_{s=1}^l  \cos{\left( \frac{2 \pi j}{n} s \right)} = -1 + \frac{\sin{\left( \frac{j \pi }{n} (2 l+ 1) \right)}}{\sin{\left( \frac{j \pi }{n} \right)}} ,
    \end{equation}
    which proves the statement.
\end{proof}
\section{Global Bounds for Thermodynamic Functions}\label{app:bounds}

In this appendix, we provide a bound for the function associated to the average number of particles \eqref{eq:av_ni},
\begin{equation}
    \frac{1}{e^x-1} ,
\end{equation}
and a bound for the function related to the summands of the specific heat \eqref{eq:specific_heat}
\begin{equation}\label{eq:Einstein_f}
    \frac{x^2 e^x}{(e^x -1)^2}.
\end{equation}
The function \eqref{eq:Einstein_f} is also known as \emph{Einstein function} \cite{pathria2016statistical}.

%\subsection{Bound for \texorpdfstring{$(e^x -1)^{-1}$}{1/(e^x-1)}}

\begin{Proposition}
    Let $k=2, 3, \dots$. Then the following inequality holds
    \begin{equation}\label{eq:bound_function_N}
         \frac{1}{e^x - 1} \leq \frac{\alpha_\textup{N}(k)}{x^k} ,
    \end{equation}
    where the constant $\alpha_\textup{N}(k)$ is given by
    \begin{equation}\label{eq:alpha_N}
       \alpha_\textup{N}(k):= -W\left(-e^{-k} k\right) \left(W\left(-e^{-k} k\right)+k\right)^{k-1} ,
    \end{equation}
    where $W$ is the Lambert W function.
\end{Proposition}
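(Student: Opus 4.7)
The inequality is equivalent to
\begin{equation}
\alpha_{\textup{N}}(k) \geq \sup_{x > 0} \frac{x^k}{e^x - 1},
\end{equation}
so the plan is to compute this supremum in closed form and show it equals the claimed expression. Let $f_k(x) := x^k/(e^x - 1)$. First I would verify that $f_k$ attains its supremum at an interior point: since $k \geq 2$, the expansion $e^x - 1 = x + O(x^2)$ gives $f_k(x) \sim x^{k-1} \to 0$ as $x \to 0^+$, while exponential decay gives $f_k(x) \to 0$ as $x \to \infty$. Continuity and positivity on $(0,\infty)$ then yield a maximiser $x^* \in (0,\infty)$.

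Next I would locate $x^*$ by differentiation. Setting $f_k'(x) = 0$ and dividing by $x^{k-1}$ gives the transcendental equation $(k - x)e^x = k$, or equivalently
\begin{equation}
(x - k)\, e^{x-k} = -k\, e^{-k}.
\end{equation}
This is in Lambert-W form: $x - k = W(-k e^{-k})$. The principal branch $W_0$ is the right choice; one checks this by analysing $g(x) := (k-x)e^x - k$, whose only zeros are $x = 0$ and a second root $x^* > k - 1 > 0$ (since $g(0)=0$, $g'(k-1)=0$ with $g(k-1) = e^{k-1} - k > 0$ for $k \geq 2$, and $g(x) \to -\infty$ as $x \to \infty$). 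The $W_{-1}$ branch reproduces the trivial root $x = 0$, which corresponds to the limit $f_k(0^+) = 0$, not the maximum. Thus $x^* = k + W_0(-k e^{-k})$, and note that $-k e^{-k} \in (-1/e, 0)$ for $k \geq 2$, so $W_0(-k e^{-k})$ is well-defined and negative.

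Finally I would substitute back. From $(k - x^*)e^{x^*} = k$ we get $e^{x^*} = k/(k - x^*)$, hence
\begin{equation}
e^{x^*} - 1 = \frac{x^*}{k - x^*} = -\frac{x^*}{W_0(-k e^{-k})}.
\end{equation}
Therefore
\begin{equation}
\sup_{x > 0} f_k(x) = \frac{(x^*)^k}{e^{x^*}-1} = -W_0(-k e^{-k})\, (x^*)^{k-1} = -W_0(-k e^{-k})\bigl(k + W_0(-k e^{-k})\bigr)^{k-1},
\end{equation}
which is precisely $\alpha_{\textup{N}}(k)$ in \eqref{eq:alpha_N}, proving the sharp bound.

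The only genuinely delicate step is the branch selection for $W$: one must argue that the positive critical point corresponds to $W_0$ rather than $W_{-1}$, and that this critical point is in fact the global maximum (not a saddle or local min). The boundary analysis $f_k(0^+) = f_k(\infty) = 0$ together with uniqueness of the positive root of $g$ makes this routine but is the conceptual crux of the argument.
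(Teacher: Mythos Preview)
Your proposal is correct and follows essentially the same route as the paper: reduce to maximising $x^k/(e^x-1)$, solve $(k-x)e^x=k$ via the Lambert $W$ function to obtain $x^*=k+W(-ke^{-k})$, and substitute back. Your version is in fact more thorough than the paper's, which simply asserts the positive root and the simplification without the boundary analysis or the branch-selection argument you supply.
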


\begin{proof}
    We look for the maximum of the function
    \begin{equation}\label{eq:max_for_n}
         \frac{x^k}{e^x - 1}.
    \end{equation}
    So, the first-order derivative is 
    \begin{equation}
        \frac{\left(e^x (k-x)-k\right) x^{k-1}}{\left(e^x-1\right)^2} .
    \end{equation}
    The positive root, obtained by solving $\left(e^x (k-x)-k\right)=0$, is $x_* = W\left(-e^{-k} k\right)+k$. Substituting $x_*$ in \eqref{eq:max_for_n}, and simplifying, we obtain $-W\left(-e^{-k} k\right) \left(W\left(-e^{-k} k\right)+k\right)^{k-1}$. 
\end{proof}

\begin{Remark}
    Some approximate values for $\alpha_\textup{N}(k)$ are $\alpha_\textup{N}(2) \simeq 0.648$, $\alpha_\textup{N}(3) \simeq 1.421$, $\alpha_\textup{N}(4) \simeq 4.780$, $\alpha_\textup{N}(5) \simeq 21.201$.
\end{Remark}

%\rb{maybe not necessary the table, just put them in a sentence.}

%\begin{table}[ht]
%\begin{center}
%\begin{tabular}{||c | c||} 
% \hline
% $k$ & $\alpha_\textup{N}(k)$ \\ [0.5ex] 
% \hline\hline
% 2 & $0.648$   \\ 
% \hline
% 3 & $1.421$   \\ 
% \hline
% 4 & $4.780$   \\ 
% \hline
% 5 & $21.201$ \\  [1ex] 
% \hline
%\end{tabular}
%\end{center}
%\caption{Some approximate values for $\alpha_\textup{N}(k)$.}
%\label{tab:alpha_n}
%\end{table}\textbf{}

%\subsection{Bound for \texorpdfstring{${x^2 e^x}/(e^x -1)^2$}{{x^2 e^x}/(e^x -1)^2} }

%The function
%\begin{equation}\label{eq:Einstein_f}
%    \frac{x^2 e^x}{(e^x -1)^2}
%\end{equation}
%is also known as \emph{Einstein function} \cite{pathria2016statistical}, and it is related to the specific heat %\eqref{eq:specific_heat} of a solid. We look for a global bound for \eqref{eq:Einstein_f}.

\begin{Proposition}
    The Einstein function \eqref{eq:Einstein_f} satisfy the inequality
    \begin{equation}\label{eq:bound_Einstein}
         \frac{x^2 e^x}{(e^x -1)^2} \leq \frac{\alpha_\textup{E}}{x^2+1},
    \end{equation}
    where $\alpha \simeq 5.23$ is a constant given by
    \begin{equation}\label{eq:alpha_e}
        \alpha_\textup{E} := \frac{(x_*^2+1)x_*^2 e^{x_*}}{(e^{x_*} -1)^2} ,
    \end{equation}
    where $x_*$ is the positive root of the function
    \begin{equation}
       x+2+ (x+4) x^2+e^x \left(x^3-4 x^2+x-2\right) .
    \end{equation}
\end{Proposition}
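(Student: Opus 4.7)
The plan is to reformulate the inequality as a statement about the maximum of the single-variable function
\begin{equation*}
f(x) \; := \; (x^2+1)\cdot \frac{x^2 e^x}{(e^x-1)^2}
\end{equation*}
on $(0,\infty)$: the claimed bound is equivalent to $f(x)\leq \alpha_{\textup{E}}$, so it suffices to show that $\sup_{x>0} f(x)$ is finite, attained at a unique interior point, and equals $\alpha_{\textup{E}}$. The boundary behaviour is favourable. Using $e^x-1=x+O(x^2)$ one obtains $\lim_{x\to 0^+}\tfrac{x^2 e^x}{(e^x-1)^2}=1$, so $f(0^+)=1$, while $f(x)\sim x^4 e^{-x}\to 0$ as $x\to \infty$. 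Since $f$ is smooth and strictly positive on $(0,\infty)$, the supremum is attained at an interior critical point $x_*$.

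To locate $x_*$ I would use logarithmic differentiation, which yields
\begin{equation*}
\frac{f'(x)}{f(x)} \; = \; \frac{2x}{x^2+1} + \frac{2}{x} + 1 - \frac{2e^x}{e^x-1}.
\end{equation*}
Clearing denominators by multiplying through by $x(x^2+1)(e^x-1)$ and collecting the $e^x$ and polynomial parts separately, the critical equation $f'(x)=0$ on $x>0$ reduces, after straightforward algebra, to
\begin{equation*}
e^x\bigl(x^3-4x^2+x-2\bigr) + \bigl(x^3+4x^2+x+2\bigr) \; = \; 0,
\end{equation*}
which is exactly the transcendental equation in the statement. Substituting the resulting $x_*$ back into $f$ then produces $f(x_*)=\alpha_{\textup{E}}$ in the form \eqref{eq:alpha_e}.

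The main obstacle is verifying that this positive critical point is unique and gives the global maximum rather than a mere local extremum. Let $h(x)$ denote the transcendental function above; a direct check gives $h(0)=0$, and the leading term $x^3 e^x$ forces $h(x)\to +\infty$ as $x\to \infty$, so at least one positive root exists. To rule out additional roots I would analyse $h'$ and show, by iterating the same polynomial-plus-exponential dominance argument (or via a convexity/Descartes-type count of sign changes in the polynomial coefficient of $e^x$ and the purely polynomial part), that $h'$ has at most one zero on $(0,\infty)$. Combined with $f(0^+)=1$ and $f(\infty)=0$, this controls the sign pattern of $f'$ (positive before $x_*$, negative after), forcing $x_*$ to be the unique global maximiser and giving $f(x)\leq f(x_*)=\alpha_{\textup{E}}$ for all $x>0$, which is the stated inequality.
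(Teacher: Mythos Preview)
Your proposal is correct and follows essentially the same route as the paper: both reduce the inequality to maximising $f(x)=(x^2+1)\,x^2 e^x/(e^x-1)^2$, derive the same transcendental critical-point equation (the paper by direct differentiation, you via the logarithmic derivative), and then argue uniqueness of the positive root by iterating the derivative analysis. The only substantive difference is that the paper carries out your ``iterate $h'$'' sketch explicitly---computing $g',g'',g''',g''''$ and working backwards from the single positive zero of $g''''$---whereas your Descartes-type alternative does not directly apply since $h$ is transcendental.
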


\begin{proof}
    In order to find the bound \eqref{eq:bound_Einstein}, we look for the maximum of
    \begin{equation}\label{eq:function_to_max_Ein}
        \frac{(x^2+1)x^2 e^{x}}{(e^{x} -1)^2} .
    \end{equation}
    As we can see in Figure \ref{fig:maxEin}, the function \eqref{eq:function_to_max_Ein} has a minimum at zero, and one maximum for positive $x$. We recall that we are only interested in positive $x$, as it is related to the frequencies and temperature of the system, i.e., positive quantities. The first-order derivative of \eqref{eq:function_to_max_Ein} is
    \begin{equation}
        -\frac{e^x x \left((x+4) x^2+e^x \left(x^3-4 x^2+x-2\right)+x+2\right)}{\left(e^x-1\right)^3} .
    \end{equation}
    Excluding the trivial zero at the origin, we have that the maximum we are looking for is a zero of 
    \begin{equation}\label{eq:function}
        g(x):=x+2+ (x+4) x^2+e^x \left(x^3-4 x^2+x-2\right) ,
    \end{equation}
    in the domain of positive $x$. We prove the existence and uniqueness of such a root.
    In order to have a zero of \eqref{eq:function} the polynomial
    \begin{equation}\label{eq:int_pol}
        x^3-4 x^2+x-2 
    \end{equation}
    must be negative, following from the fact that the factor of \eqref{eq:int_pol} and the other quantities in \eqref{eq:function} are all positive for $x>0$. Solving the polynomial equation \eqref{eq:int_pol} we have that the zero of \eqref{eq:function}, must be in the range
    \begin{equation}
        0<x<\frac{1}{3} \left(\sqrt[3]{6 \sqrt{87}+73}+\sqrt[3]{73-6 \sqrt{87}}+4\right)=:s_0.
    \end{equation}
    To show the uniqueness of the root for $x>0$, we compute the first four derivatives of \eqref{eq:function},
    \begin{align}
       g'(x) &= x (3 x+8)+e^x (x ((x-1) x-7)-1)+1 ,\\
       g''(x) &= 6 x+e^x (x (x (x+2)-9)-8)+8 ,\\
       g'''(x) &= e^x (x (x (x+5)-5)-17)+6 ,\\
       g''''(x) &= e^x (x (x (x+8)+5)-22) .
    \end{align}
    We notice that $g''''$ has only one positive zero at 
    \begin{equation}
        s4:= -\frac{8}{3}+\frac{7}{3} \cos \left(\frac{1}{3} \left(\pi -\arctan\left(\frac{6 \sqrt{66}}{5}\right)\right)\right)+\frac{7}{3} \cos \left(\frac{1}{3} \left(\arctan\left(\frac{6 \sqrt{66}}{5}\right)-\pi \right)\right) .
    \end{equation}
    Then, we have that the third-order derivative has a minimum (one can check that by straightforward computations) at $s_4$, with $g'''(s_4)<0$. So, before and after the minimum the function $g'''(x)$ is monotonic. Since $g'''(s_0)>0$ and  $g'''(0)<0$, there exist a unique zero, $ g'''(s_3)=0$, which lies in the interval $s_4<s_3<s_0$. A similar argument is repeated for the other derivatives. After four steps we obtain the conclusion that the function $g(x)$ has a unique minimum for positive $x$, leading to a unique root for $x>0$ thanks to the arguments based on monotonicity.
\begin{figure}[!ht]
\centering
\includegraphics[width=0.4\textwidth]{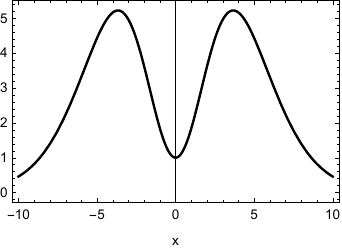}
\caption{Graphic of the function \eqref{eq:function_to_max_Ein}.}
\label{fig:maxEin}
\end{figure}
\end{proof}

%\reftitle{References}

%=====================================
% References, variant A: external bibliography
%=====================================
%\externalbibliography{yes}
\bibliography{bibliography.bib}

\begin{thebibliography}{10}

\bibitem{alodjants2022phase}
A.~Alodjants, A.~Y. Bazhenov, and M.~Nikitina.
\newblock Phase transitions in quantum complex networks.
\newblock In {\em Journal of Physics: Conference Series}, volume 2249, page
  012014. IOP Publishing, 2022.

\bibitem{anderson1994canonical}
A.~Anderson.
\newblock Canonical transformations in quantum mechanics.
\newblock {\em Annals of Physics}, 232(2):292--331, 1994.

\bibitem{bassoli2021quantum}
R.~Bassoli, H.~Boche, C.~Deppe, R.~Ferrara, F.~H. Fitzek, G.~Janssen, and
  S.~Saeedinaeeni.
\newblock {\em Quantum communication networks}, volume~23.
\newblock Springer, 2021.

\bibitem{bendito2008formula}
E.~Bendito, A.~Carmona, A.~M. Encinas, and J.~M. Gesto.
\newblock A formula for the kirchhoff index.
\newblock {\em International journal of quantum chemistry}, 108(6):1200--1206,
  2008.

\bibitem{blaszak2013canonical}
M.~B{\l}aszak and Z.~Doma{\'n}ski.
\newblock Canonical transformations in quantum mechanics.
\newblock {\em Annals of Physics}, 331:70--96, 2013.

\bibitem{burioni2001bose}
R.~Burioni, D.~Cassi, M.~Rasetti, P.~Sodano, and A.~Vezzani.
\newblock Bose-einstein condensation on inhomogeneous complex networks.
\newblock {\em Journal of Physics B: Atomic, Molecular and Optical Physics},
  34(23):4697, 2001.

\bibitem{de2007networks}
M.~de~Ponte, S.~S. Mizrahi, and M.~H.~Y. Moussa.
\newblock Networks of dissipative quantum harmonic oscillators: A general
  treatment.
\newblock {\em Physical Review A}, 76(3):032101, 2007.

\bibitem{dorogovtsev2008critical}
S.~N. Dorogovtsev, A.~V. Goltsev, and J.~F. Mendes.
\newblock Critical phenomena in complex networks.
\newblock {\em Reviews of Modern Physics}, 80(4):1275, 2008.

\bibitem{godsil2001algebraic}
C.~Godsil and G.~F. Royle.
\newblock {\em Algebraic graph theory}, volume 207.
\newblock Springer Science \& Business Media, 2001.

\bibitem{gray2006toeplitz}
R.~M. Gray et~al.
\newblock Toeplitz and circulant matrices: A review.
\newblock {\em Foundations and Trends in Communications and Information
  Theory}, 2(3):155--239, 2006.

\bibitem{gutman1996quasi}
I.~Gutman and B.~Mohar.
\newblock The quasi-wiener and the kirchhoff indices coincide.
\newblock {\em Journal of chemical information and computer sciences},
  36(5):982--985, 1996.

\bibitem{gutman2004generalized}
I.~Gutman and W.~Xiao.
\newblock Generalized inverse of the laplacian matrix and some applications.
\newblock {\em Bulletin (Acad{\'e}mie serbe des sciences et des arts. Classe
  des sciences math{\'e}matiques et naturelles. Sciences math{\'e}matiques)},
  pages 15--23, 2004.

\bibitem{herrero2002ising}
C.~P. Herrero.
\newblock Ising model in small-world networks.
\newblock {\em Physical Review E}, 65(6):066110, 2002.

\bibitem{jolley2012summation}
L.~B.~W. Jolley.
\newblock {\em Summation of series}.
\newblock Courier Corporation, 2012.

\bibitem{kim2003netons}
B.~J. Kim, H.~Hong, and M.~Choi.
\newblock Netons: vibrations of complex networks.
\newblock {\em Journal of Physics A: Mathematical and General}, 36(23):6329,
  2003.

\bibitem{kuehn2015multiple}
C.~Kuehn et~al.
\newblock {\em Multiple time scale dynamics}, volume 191.
\newblock Springer, 2015.

\bibitem{landau1980lifshitz}
L.~Landau.
\newblock Em lifshitz statistical physics.
\newblock {\em Course of theoretical physics}, 5:396--400, 1980.

\bibitem{li2009specific}
Y.~Li, X.~Qiu, Y.~Yin, F.~Yang, and Q.~Fan.
\newblock The specific heat of carbon nanotube networks and their potential
  applications.
\newblock {\em Journal of Physics D: Applied Physics}, 42(15):155405, 2009.

\bibitem{mahan2013many}
G.~D. Mahan.
\newblock {\em Many-particle physics}.
\newblock Springer Science \& Business Media, 2013.

\bibitem{monakhova2012survey}
E.~A. Monakhova.
\newblock A survey on undirected circulant graphs.
\newblock {\em Discrete Mathematics, Algorithms and Applications},
  4(01):1250002, 2012.

\bibitem{nakajima1976logic}
K.~Nakajima, Y.~Onodera, and Y.~Ogawa.
\newblock Logic design of josephson network.
\newblock {\em Journal of applied physics}, 47(4):1620--1627, 1976.

\bibitem{niu1989theory}
Q.~Niu and F.~Nori.
\newblock Theory of superconducting wire networks and josephson-junction arrays
  in magnetic fields.
\newblock {\em Physical Review B}, 39(4):2134, 1989.

\bibitem{nokkala2023complex}
J.~Nokkala, J.~Piilo, and G.~Bianconi.
\newblock Complex quantum networks: a topical review.
\newblock {\em Journal of Physics A: Mathematical and Theoretical}, 2023.

\bibitem{pathria2016statistical}
R.~K. Pathria.
\newblock {\em Statistical mechanics}.
\newblock Elsevier, 2016.

\bibitem{pirandola2019end}
S.~Pirandola.
\newblock End-to-end capacities of a quantum communication network.
\newblock {\em Communications Physics}, 2(1):51, 2019.

\bibitem{srivastava2022physics}
G.~P. Srivastava.
\newblock {\em The physics of phonons}.
\newblock CRC press, 2022.

\bibitem{srivastava2005statistical}
R.~Srivastava and J.~Ashok.
\newblock {\em Statistical Mechanics}.
\newblock PHI Learning Pvt. Ltd., 2005.

\bibitem{suchecki2013ising}
K.~Suchecki and J.~A. Ho{\l}yst.
\newblock Ising model on connected complex networks.
\newblock In {\em Order, Disorder and Criticality: Advanced Problems of Phase
  Transition Theory Volume 3}, pages 167--200. World Scientific, 2013.

\bibitem{wechselberger2013canard}
M.~Wechselberger, J.~Mitry, and J.~Rinzel.
\newblock Canard theory and excitability.
\newblock {\em Nonautonomous dynamical systems in the life sciences}, pages
  89--132, 2013.

\bibitem{xiao2003resistance}
W.~Xiao and I.~Gutman.
\newblock Resistance distance and laplacian spectrum.
\newblock {\em Theoretical chemistry accounts}, 110:284--289, 2003.

\bibitem{xiong2024regulation}
K.~Xiong, Y.~Liu, M.~Zhou, H.~Dong, and Z.~Yan.
\newblock Regulation of phonon localization on thermal transport in complex
  networks.
\newblock {\em Physical Review E}, 109(4):044311, 2024.

\bibitem{xiong2021regulating}
K.~Xiong, Z.~Yan, Y.~Xie, and Z.~Liu.
\newblock Regulating heat conduction of complex networks by distributed nodes
  masses.
\newblock {\em Scientific Reports}, 11(1):5501, 2021.

\bibitem{xiong2018influence}
K.~Xiong, C.~Zeng, Z.~Liu, and B.~Li.
\newblock Influence of the degree of a complex network on heat conduction.
\newblock {\em Physical Review E}, 98(2):022115, 2018.

\bibitem{zhu1996extensions}
H.-Y. Zhu, D.~J. Klein, and I.~Lukovits.
\newblock Extensions of the wiener number.
\newblock {\em Journal of chemical information and computer sciences},
  36(3):420--428, 1996.

\end{thebibliography}
\bibliographystyle{abbrv}
%=====================================

\section*{Statements and Declarations}

\begin{description}
    \item[\textbf{Funding}] The authors declare that no funds, grants, or other support were received during the preparation of this manuscript.
    \item[\textbf{Competing Interests}] The authors have no relevant financial or non-financial interests to disclose.
    \item[\textbf{Data Availability}] All datasets are available upon request to the corresponding author.
\end{description}

\end{document}